\newtheorem{theorem}{Theorem}
\newtheorem{lemma}{Lemma}
\newtheorem{remark}{Remark}
\begin{document}
\title{TuringMobile: A Turing Machine of Oblivious Mobile Robots with Limited Visibility and its Applications}
\author{Giuseppe A.~Di Luna\footnotemark[1] \and Paola Flocchini\footnotemark[2] \and Nicola Santoro\footnotemark[3] \and Giovanni Viglietta\footnotemark[4]}
\date{}
\thispagestyle{empty}
\renewcommand{\thefootnote}{\fnsymbol{footnote}}
\footnotetext[1]{Aix-Marseille University and LiS Laboratory, France. E-mail: \texttt{giuseppe.diluna@lif.univ-mrs.fr}.}
\footnotetext[2]{University of Ottawa, Canada. E-mail: \texttt{paola.flocchini@uottawa.ca}.}
\footnotetext[3]{Carleton University, Canada. E-mail: \texttt{santoro@scs.carleton.ca}.}
\footnotetext[4]{JAIST, Japan. E-mail: \texttt{johnny@jaist.ac.jp}.}
\renewcommand{\thefootnote}{\arabic{footnote}}
\setcounter{footnote}{0}

\maketitle

\begin{abstract}
In this paper we investigate the computational power of a set of mobile robots with limited visibility.
At each iteration, a robot takes a snapshot of its surroundings, uses the snapshot to compute a destination point, and it moves toward its destination. Each robot is punctiform and memoryless, it operates in $\mathbb{R}^m$, it has a local reference system independent of the other robots' ones, and is activated asynchronously by an adversarial scheduler. Moreover, the robots are non-rigid, in that they may be stopped by the scheduler at each move before reaching their destination (but are guaranteed to travel at least a fixed unknown distance before being stopped).

We show that despite these strong limitations, it is possible to arrange $3m+3k$ of these weak entities in $\mathbb{R}^m$ to simulate the behavior of a stronger robot that is rigid (i.e., it always reaches its destination) and is endowed with $k$ registers of persistent memory, each of which can store a real number. We call this arrangement a {\em TuringMobile}. In its simplest form, a TuringMobile consisting of only three robots can travel in the plane and store and update a single real number. We also prove that this task is impossible with fewer than three robots.

Among the applications of the TuringMobile, we focused on Near-Gathering (all robots have to gather in a small-enough disk) and Pattern Formation (of which Gathering is a special case) with limited visibility. Interestingly, our investigation implies that both problems are solvable in Euclidean spaces of any dimension, even if the visibility graph of the robots is initially disconnected, provided that a small amount of these robots are arranged to form a TuringMobile. In the special case of the plane, a basic TuringMobile of only three robots is sufficient.
\end{abstract}

\section{Introduction}
\paragraph*{Framework and background}

The investigations of systems of autonomous mobile robots  have long moved
outside the boundaries of the engineering, control, and AI communities.
Indeed,  the computational and complexity issues arising  in such systems
are  important  research  topics within  theoretical computer science,
especially in distributed computing.
In these theoretical investigations, the robots are usually viewed as
punctiform computational entities  that live in a metric space, typically $\mathbb
R^2$ or $\mathbb R^3$, in which they can move.
 Each  robot operates in ``Look-Compute-Move'' (LCM) cycles: it observes
its surroundings, it computes a destination within the space  based on what it sees, and it moves toward the destination.
 The only means of  interaction between
 robots are observations and  movements: that is,  communication is {\em
stigmergic}.
 The robots, identical and outwardly indistinguishable,   are {\em
oblivious}:  when starting a new
 cycle, a robot  has no memory of its activities (observations,
computations, and moves) from  previous cycles (``every time is  the
first time'').
 In other words, the  robots have no persistent memory; for this reason,
they are sometime said to be memoryless.
 Clearly obliviousness is a desirable property as it ensures a  degree of
self-stabilization and fault-tolerance into the system and its computations.
Equally clear is that being memoryless severely constrains the
computational capabilities of the robots.

There have been  intensive research efforts on  the computational issues arising
with such robots, and
 an extensive  literature  has been produced  in
particular in regard to  the important class of
 {\em Pattern Formation} problems~\cite{xpatternnavarra,xxFlPSV17,xxFlPSW08,xxFuYKY15,xxSuzY99,xYamS10} as well as for {\em
Gathering}~\cite{xfatGathering,xAgP06,xxAnOSY99,xgatehringnavarra,xCiFPS12,xCohP05,xseb1,xseb2,xgatheringdefago,xxFlPSW05,xPaPV15};  and {\em Scattering}~\cite{xbramas,xscatteringconn};  see also~\cite{xflockingcanepa,xxDaFSY15,xYuUKY17}.
The goal  of the research has been to  understand the minimal
assumptions needed for a team (or swarm) of such robots to solve
a given problem, and to identify the impact that specific factors have on
feasibility and hence computability.

The most important factor is the power of  the adversarial scheduler that
decides when each activity of each robot starts and when it ends.
The main   adversaries (or ``environments'') considered  in the
literature are:    {\em synchronous},  in which
the computation cycles of all active robots are synchronized,
and at each cycle either all (in the fully synchronous case) or a subset (in the semi-synchronous case) of the robots are activated,
and {\em asynchronous}, where computation cycles are not
synchronized, each activity  can take a different and unpredictable amount of time,
and each robot can be independently activated at each time instant.
An important factor is whether a robot moving toward a computed
destination is guaranteed to reach it (i.e., it is a {\em rigid} robot), or it can be
stopped on the way (i.e., it is a {\em non-rigid} robot) at a point  decided by an
 adversary. In all the above cases, the power of the adversaries is limited by some
basic fairness assumption.
All the existing  investigations have concentrated on the study of (a-)synchrony,
several on the impact of rigidity, some on other relevant factors such as agreement on local coordinate systems or on their orientation, etc.; for a review, see~\cite{xxFlPS12}.

From a computational point of view, there is another crucial factor: the visibility range of the robots, that is, how much of
the surrounding space they are able to observe in a Look operation. In this regard,  two basic settings are considered:
 {\em unlimited visibility}, where the robots can see the entire space
(and thus all other robots), and  {\em limited  visibility}, when
the robots have a fixed visibility radius.
While  the investigations on  (a-)synchrony and rigidity have concentrated on all aspects of those assumptions,
this is not the case with respect to visibility.
In fact, almost all research has assumed unlimited visibility;
 few exceptions
are the  algorithms for Convergence~\cite{xxAnOSY99}, Gathering~\cite{xxxDeKLHPW11,xxDeKLHPW11,xxFlPSW05}, and
Near-Gathering \cite{xPaPV15} when  the visibility range of the robot is limited.
The unlimited visibility assumption clearly greatly
 simplifies the computational universe under investigation;  at the same time,
 it neglects the more general and realistic one, which is still largely unknown.

Let us also stress that, in the existing literature, all results  on
oblivious robots are for  $\mathbb R^1$ and $\mathbb R^2$; the  only exception is the recent result on  plane formation in  $\mathbb R^3$
by semi-synchronous  rigid robots
with unlimited visibility~\cite{xYuUKY17}. No results exist for robots in  higher dimensions.

\paragraph*{Contributions}
In this paper we contribute several constructive insights on the
computational universe of  oblivious robots with limited visibility,
especially asynchronous non-rigid ones, in any dimension.

The first and main contribution  is
a technique to  construct a ``moving Turing Machine'' made solely of  asynchronous
oblivious non-rigid robots in $\mathbb R^m$ with limited visibility, for
any $m\geq 2$.
More precisely, we show how to arrange  $3m+3k$ identical non-rigid
oblivious robots in $\mathbb R^m$ with a visibility radius of
$V+\varepsilon$ (for any $\varepsilon>0$) and how  to program them so that
they can collectively behave as a single rigid robot in
$\mathbb R^m$ with $k$ persistent registers and visibility radius $V$
would.
This team of identical robots is informally called a \emph{TuringMobile}.

We obtain this result by using as fundamental construction a basic component,
which is able to move in $\mathbb{R}^2$ while storing and updating a single real
number. Interestingly, we show that $3$ agents are necessary and
sufficient to build such a machine. The TuringMobile will then be
built by arranging multiple copies of this basic component side by side.

We stress that robots forming the TuringMobile are \emph{asynchronous},
that is, the scheduler makes them move at independent arbitrary speeds, and
each robot takes the next snapshot an arbitrary amount of time after
terminating each move; furthermore, they are \emph{anonymous}, in that they
are indistinguishable from each other, and they all execute the same
program to compute their destination points. Notably, this program only performs arithmetic operations, square roots, and comparisons (hence no transcendental function has to be computed by the robots).

 A TuringMobile is  a powerful construct that, once deployed in a swarm of robots, can  act as
 a  rigid leader  with persistent memory, allowing the swarm to overcome  many
 handicaps imposed by obliviousness, limited visibility, and asynchrony.
 As examples we present a variety of applications in  $\mathbb{R}^m$, with $m\geq 2$.

First of all we show how a TuringMobile can explore and search the space.
 We then show how it can be employed
 to solve
  the  long-standing open problem of (Near-)Gathering with limited
visibility  in spite of an asynchronous
non-rigid scheduler  and disagreement on the axes,
a problem still open without a TuringMobile.
 Interestingly, the presence of the TuringMobile allows Gathering to be done
 even if the initial visibility graph is disconnected (this does not change the fact that there are cases in which Gathering is impossible, as remarked in \cite{xxAnOSY99,xxFlPSW05}). Finally we show how the  arbitrary Pattern Formation problem can be solved
under the same conditions (asynchrony, limited visibility, possibly disconnected
visibility graph, etc.).

There is a limitation to the use of a TuringMobile when deployed in a swarm of robots.
Namely, the TuringMobile must be always recognizable
(e.g., by its  unique shape) so that  other robots cannot interfere
 by  moving too close to the machine, disrupting
its structure.

The paper is organized as follows: In Section \ref{s2} we give formal
definitions, introducing mobile robots with or without memory as \emph{oracle semi-oblivious real RAMs}. In Section \ref{s3} we
illustrate our implementation of the TuringMobile. The correctness of the
proposed construction is proved in Section \ref{s4}. In Section \ref{s5} we
show how to apply the TuringMobile to solve fundamental problems. In Section~\ref{s6} we conclude with some extra remarks and open problems.

\section{Definitions and Preliminaries}\label{s2}

\subsection{Oracle Semi-Oblivious Real RAMs}
\paragraph*{Real random-access machines}
A \emph{real RAM}, as defined by Shamos~\cite{xram2,xram1}, is a random-access machine~\cite{xram0} that can operate on real numbers. That is, instead of just manipulating and storing integers, it can handle arbitrary real numbers and do infinite-precision operations on them. It has a finite set of internal \emph{registers} and an infinite ordered sequence of \emph{memory cells}; each register and each memory cell can hold a single real number, which the machine can modify by executing its program.\footnote{Nonetheless, the constant operands in a real RAM's program cannot be arbitrary real numbers, but have to be integers.}

A real RAM's instruction set contains at least the four arithmetic operations, but it may also contain $k$-th roots, trigonometric functions, exponentials, logarithms, and other analytic functions, depending on the application. The machine can also compare two real numbers and branch depending on which one is larger.

The initial contents of the memory cells are the \emph{input} of the machine (we stipulate that only finitely many of them contain non-zero values), and their contents when the machine halts are its \emph{output}. So, each program of a real RAM can be viewed as a partial function mapping tuples of reals into tuples of reals.

\begin{remark}
The real RAMs can at least compute the Turing-computable partial functions over the integers. Indeed, it is well known that all these functions can be computed by traditional RAMs whose programs only contain integer additions, subtractions, and comparisons. It is obvious that a real RAM running such a program on an integer-valued input behaves exactly as a traditional RAM, and therefore computes the same partial function.
\end{remark}

\paragraph*{Oracles and semi-obliviousness}
We introduce the \emph{oracle semi-oblivious real RAM}, which is a real RAM with an additional ``ASK'' instruction. Whenever this instruction is executed, the contents of all the memory cells are replaced with new values, which are a function of the numbers stored in the registers.

In other words, the machine can query an external oracle by putting a question in its $k$ registers in the form of $k$ real numbers. The oracle then reads the question and writes the answer in the machine's memory cells, erasing all pre-existing data. The term ``semi-oblivious'' comes from the fact that, every time the machine invokes the oracle, it ``forgets'' everything it knows, except for the contents of the registers, which are preserved.\footnote{Observe that, in general, the machine cannot salvage its memory by encoding its contents in the registers: since its instruction set has only analytic functions, it cannot injectively map a tuple of arbitrary real numbers into a single real number.}

\begin{figure}[ht]
\begin{center}
\includegraphics[width=\linewidth]{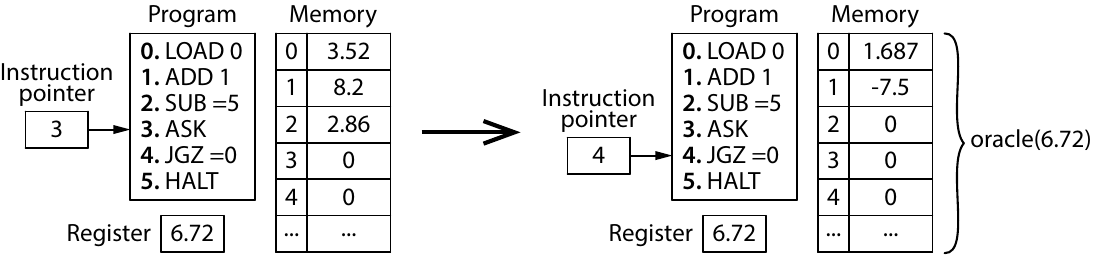}
\end{center}
\caption{An oracle semi-oblivious real RAM with a single register before and after executing an ``ASK'' instruction. The entire memory is overwritten by the oracle based on the number read from the register, which remains unaltered}
\label{f:ram}
\end{figure}

In spite of their semi-obliviousness, these real RAMs with oracles are at least as powerful as Turing Machines with oracles.

\begin{theorem}
Given an oracle Turing Machine, there is an oracle semi-oblivious real RAM with one register that computes the same partial function.
\end{theorem}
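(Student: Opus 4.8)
The plan is to have the real RAM carry the \emph{entire} configuration of the given oracle Turing Machine $M$ — its control state, head position, and the (finite) tape contents, including the contents of the query tape — through all oracle calls by packing it into the single persistent register as a Gödel number. Between consecutive oracle calls the real RAM may use its whole memory freely, so the only place where semi-obliviousness bites is the ``ASK'' instruction itself, and a finite configuration of $M$ is a finite object, hence codable as one integer; this is what makes ``one register'' enough.

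First I would fix, once and for all, a pairing/Gödel encoding $\langle\cdot\rangle$ of configurations of $M$ as natural numbers, together with input/output conventions matching the real RAM's memory cells to $M$'s tape (e.g.\ cell $i$ holds the code of the $i$-th input symbol, the output being read off analogously). By the Remark, a real RAM — indeed one using only the four arithmetic operations and comparisons — can compute every Turing-computable function over the integers; hence it can: (i) from the input cells build $\langle c_0\rangle$, where $c_0$ is $M$'s initial configuration on that input; (ii) decode a given $\langle c\rangle$, laying $c$ out across its memory cells; (iii) simulate one step of $M$ (its transition table is finite and hard-wired into the program); and (iv) re-encode the configuration currently spread over memory back into a single natural number. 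None of this requires an oracle call, so there is no obliviousness obstacle in any of it.

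Next I would specify the oracle attached to the real RAM: on reading a register value $r$, it decodes $r$ as a configuration $c$ of $M$ that is in a query state, reads the query string $q$ on $c$'s query tape, consults $M$'s oracle to get the answer $a$ to $q$, forms the configuration $c'$ obtained from $c$ by replacing the query tape with $a$ and advancing to the appropriate post-query state, and writes $\langle c'\rangle$ into memory cell $0$ (every other cell being zeroed, since ASK overwrites the whole memory). This is a well-defined function of $r$ (in fact one computable relative to $M$'s oracle), which is all the statement asks for. With this oracle in place, the real RAM runs the loop: on the first iteration build $\langle c_0\rangle$ from the input cells, otherwise decode the current $\langle c\rangle$ from the register into memory; then simulate steps of $M$ in memory until either $M$ halts — in which case reformat the final tape into the output cells and stop — or $M$ enters a query state — in which case re-encode the current configuration into the register, execute ASK, read $\langle c'\rangle$ back from memory cell $0$, and repeat. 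One then verifies by induction that on every input the real RAM's run faithfully tracks $M$'s run step for step, with oracle answers supplied correctly; so it halts exactly when $M$ halts, with the same output, and diverges exactly when $M$ diverges, i.e.\ the two compute the same partial function.

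The only genuine difficulty is the ``one register'' constraint itself, which forces all state surviving an oracle call into a single real number; it is dissolved by the Gödel-coding idea plus the fact (from the Remark) that the real RAM can manipulate such codes with just arithmetic and comparisons. What remains is bookkeeping that I would pin down while fixing the encoding conventions: which cells carry input and output, how the query-tape portion of a configuration is delimited within $\langle c\rangle$, and the observation that a run of $M$ that halts without ever querying simply never triggers an ASK, which is legitimate since ASK is optional.
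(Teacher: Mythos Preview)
Your proof is correct, and the central device---packing $M$'s entire configuration into the single register as a G\"odel number so that it survives the memory wipe at ASK---is the same one the paper uses. The organization, however, is genuinely different. You adopt the query-tape model of oracle Turing machines and invoke ASK only when $M$ enters a query state; your real-RAM oracle then does substantial work, decoding the configuration, consulting $M$'s oracle on the query string, and writing back the code of the full post-query configuration into memory cell~$0$. The paper instead follows Rogers' model (a read-only oracle tape with all answers pre-written) and invokes ASK at \emph{every} simulated step of $M$; its oracle merely returns the single symbol currently under $M$'s oracle-tape head, and after the ASK the RAM reconstructs the rest of the state by decoding the still-preserved register and combining it with that one symbol. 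Thus in the paper's version the register-preservation property is used in an essential way after every ASK, whereas in yours the register functions only as a one-way channel carrying the query to the oracle, with the answer coming back through memory. Both routes are sound; yours minimizes the number of oracle calls at the price of a heavier oracle, while the paper's keeps the oracle's job trivial and leans on the register more visibly.

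One small expository slip worth fixing: at the top of a non-first loop iteration you write ``decode the current $\langle c\rangle$ from the register,'' but after ASK the register still holds the \emph{pre}-query code, not $\langle c'\rangle$. What you want there is to continue from the $\langle c'\rangle$ you just read out of memory cell~$0$ (or copy that into the register before looping). This is a trivial repair and does not affect the argument.
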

\begin{proof}
Following Rogers~\cite{xrogers}, we define an oracle Turing Machine as a Turing Machine with an additional read-only tape containing the answers to all possible oracle queries. The $i$th cell of the oracle tape contains a symbol that is read by the machine whenever the head of the oracle tape is in position $i$.

Given such a machine $M$, we construct an oracle semi-oblivious real RAM with one register $M'$ that ``simulates'' $M$ step by step. As already observed, a real RAM can compute any Turing-computable partial function, and $M'$ behaves as a real RAM as long as it does not invoke its oracle. So, $M'$ can encode and decode the entire state of $M$, including the contents of its non-oracle tapes and the positions of its heads on the tapes, as a single integer: indeed, the functions that encode and decode a Turing Machine's state are themselves Turing-computable.

To simulate one step of $M$, $M'$ encodes the current state of $M$ in its register and executes an ``ASK'' instruction. The oracle of $M'$ reads the register, decodes the state of $M$, fetches the position of the head on the oracle tape, and answers with the symbol $s$ that $M$ would read on its oracle tape at that position. Next, $M'$ finds $s$ in the first cell of its own memory. So, $M'$ decodes the contents of the register to retrieve the state of $M$, and uses it along with $s$ to compute the next state of $M$.
\end{proof}

\subsection{Mobile Robots as Real RAMs}\label{s2:2}
\paragraph*{Mobile robots}
Our oracle semi-oblivious real RAM model can be reinterpreted in the realm of \emph{mobile robots}. A mobile robot is a computational entity, modeled as a geometric point, that lives in a metric space, typically $\mathbb R^2$ or $\mathbb R^3$. It can observe its surroundings and move within the space based on what it sees. The same space may be populated by several mobile robots, each with its local coordinate system, and static objects.

To compute its next destination point, a mobile robot executes a real RAM program with input a representation of its local view of the space. After moving, its entire memory is erased, but the content of its $k$ registers is preserved. Then it makes a new observation; from the observation data and the contents of the registers, it computes another destination point, and so on. If $k=0$, the mobile robot is said to be \emph{oblivious}. Note that robots have no notion of time or absolute positions.

The actual movement of a mobile robot is controlled by an external \emph{scheduler}. The scheduler decides how fast the robot moves toward its destination point, and it may even interrupt its movement before the destination point is reached. If the movement is interrupted midway, the robot makes the next observation from there and computes a new destination point as usual. The robot is not notified that an interruption has occurred, but it may be able to infer it from its next observation and the contents of its registers. For fairness, the scheduler is only allowed to interrupt a robot after it has covered a distance of at least $\delta$ in the current movement, where $\delta$ is a positive constant unknown to the robots. This guarantees, for example, that if a robot keeps computing the same destination point, it will reach it in a finite number of iterations. If $\delta=\infty$, the robot always reaches its destination, and is said to be \emph{rigid}.

\paragraph*{Mobile robots, revisited}
A mobile robot in $\mathbb R^m$ with $k$ registers can be modeled as an oracle semi-oblivious real RAM with $2m+k+1$ registers, as follows.

\begin{itemize}
\item $m$ \emph{position registers} hold the absolute coordinates of the robot in $\mathbb R^m$.
\item $m$ \emph{destination registers} hold the destination point of the robot, expressed in its local coordinate system.
\item $1$ \emph{timestamp register} contains the time of the robot's last observation.
\item $k$ \emph{true registers} correspond to the registers of the robot.
\end{itemize}

As the RAM's execution starts, it ignores its input, erases all its registers, and executes an ``ASK'' instruction. The oracle then fills the RAM's memory with the robot's initial position $p$, the time $t$ of its first observation, and a representation of the geometric entities and objects surrounding the robot, as seen from $p$ at time $t$.

The RAM first copies $p$ and $t$ in its position registers and timestamp register, respectively. Then it executes the program of the mobile robot, using its true registers as the robot's registers and adding $m+1$ to all memory addresses. This effectively makes the RAM ignore the values of $p$ and $t$, which indeed are not supposed to be known to the mobile robot.

When the robot's program terminates, the RAM's memory contains the output, which is the next destination point $p'$, expressed in the robot's coordinate system. The RAM copies $p'$ into its destination registers, and the execution jumps back to the initial ``ASK'' instruction.

Now the oracle reads $p$, $p'$, and $t$ from the RAM's registers (it ignores the true registers), converts $p'$ in absolute coordinates (knowing $p$ and the orientation of the local coordinate system of the robot) and replies with a new position $p''$, a timestamp $t'>t$, and observation data representing a snapshot taken from $p''$ at time $t'$. To comply with the mobile robot model, $p''$ must be on the segment $pp'$, such that either $p''=p'$ or $\overline{pp''}\geq\delta$. The execution then proceeds in the same fashion, indefinitely.

Note that in this setting the oracle represents the scheduler. The presence of a timestamp in the query allows the oracle to model dynamic environments in which several independent robots may be moving concurrently (without a timestamp, two observations from the same point of view would always be identical). Also note that in this formulation there are no actual robots moving through an environment in time, but only RAMs which query an oracle, which in turn provides a ``virtual'' environment and timeline by writing information in their memory.

\paragraph*{Snapshots and limited visibility}
In the mobile robot model we consider in this paper, an observation is simply an instantaneous \emph{snapshot} of the environment taken from the robot's position. In turn, each entity and object that the robot can see is modeled as a dimensionless point in $\mathbb R^m$. A mobile robot has a positive \emph{visibility radius} $V$: it can see a point in $\mathbb R^m$ if and only if it is located at distance at most $V$ from its current position. If $V=\infty$, the robot is said to have \emph{unlimited visibility}.

As we hinted at earlier in this section, a mobile robot has its own local reference system in which all the coordinates of the objects in its snapshots are expressed. The origin of a robot's local coordinate system always coincides with the robot's position (hence it follows the robot as it moves), and its orientation and handedness are decided by the scheduler (and remain fixed). Different mobile robots may have coordinate systems with a different orientation or handedness. (However, when two robots have the same visibility radius, they also implicitly have the same unit of distance.)

So, a snapshot is just a (finite) list of points, each of which is an $m$-tuple of real numbers.

\paragraph*{Simulating memory and rigidity}
The main contribution of this paper, loosely speaking, is a technique to turn non-rigid oblivious robots into rigid robots with persistent memory, under certain conditions. More precisely, if $3m+3k$ identical non-rigid oblivious robots in $\mathbb R^m$ with a visibility radius of $V+\varepsilon$ (for any $\varepsilon>0$) are arranged in a specific pattern and execute a specific algorithm, they can collectively act in the same way as a single rigid robot in $\mathbb R^m$ with $k>0$ persistent registers and visibility radius $V$ would. This team of identical robots is informally called a \emph{TuringMobile}.

We stress that the robots of a TuringMobile are \emph{asynchronous}, that is, the scheduler makes them move at independent arbitrary speeds, and each robot takes the next snapshot an arbitrary amount of time after terminating each move. The robots are also \emph{anonymous}, in that they are indistinguishable from each other, and they all execute the same program.

Although our technique is fairly general and has a plethora of concrete applications (some are discussed in Section~\ref{s5}), a ``perfect simulation'' is achieved only under additional conditions on the scheduler or on the environment. These conditions will be discussed toward the end of Section~\ref{s3:2}.

\section{Implementing the TuringMobile}\label{s3}
\subsection{Basic Implementation}\label{s3:1}
We will first describe how to construct a basic version of the TuringMobile with just three oblivious non-rigid robots in $\mathbb R^2$. This TuringMobile can remember a single real number and rigidly move in the plane by fixed-length steps: its layout is sketched in Figure~\ref{f:machine}. In Section~\ref{s3:2}, we will show how to combine several copies of this basic machine to obtain a full-fledged TuringMobile.

\begin{figure}[ht]
\begin{center}
\includegraphics[width=\linewidth]{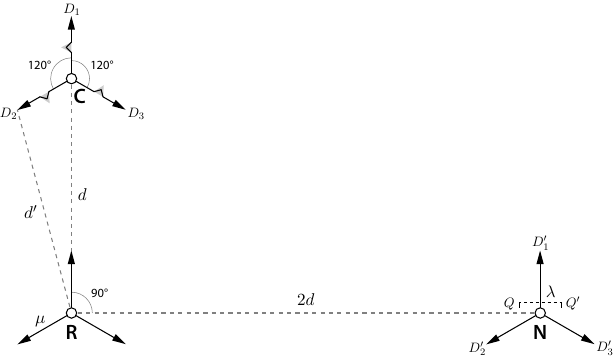}
\end{center}
\caption{Basic TuringMobile at rest, not drawn to scale ($\mu$ and $\lambda$ should be smaller)}
\label{f:machine}
\end{figure}

\paragraph*{Position at rest}
The elements of the basic TuringMobile are three robots: a \emph{Commander} robot, a \emph{Number} robot, and a \emph{Reference} robot, located in $C$, $N$, and $R$, respectively. These robots have the same visibility radius of $V+\varepsilon$, where $\varepsilon \ll V$, and there is always a disk of radius $\varepsilon$ containing all three of them. When the machine is ``at rest'', $\angle NRC$ is a right angle, the distance between $C$ and $R$ is some fixed value $d\ll\varepsilon$, and the distance between $R$ and $N$ is approximately $2d$. More precisely, $N$ lies on a segment $QQ'$ of length $\lambda$, where $\lambda\ll d$ is some fixed value, such that $Q$ has distance $2d-\lambda/2$ from $R$ and $Q'$ has distance $2d+\lambda/2$ from $R$.

\paragraph*{Representing numbers}
The distance between the Reference robot and the Number robot when the TuringMobile is at rest is a representation of the real number $r$ that the machine is currently storing. There are several possible ways of defining such a code: an easy one is to encode the number $r$ as $\overline{RN}=\alpha(r)=2d+\arctan(r)\cdot \lambda/\pi$ and to decode it as $r=\tan\left( (\overline{RN}-2d)\cdot\pi/\lambda \right)$. A different method that does not use transcendental functions is discussed in Section~\ref{s6}.

\paragraph*{Movement directions}
The Commander's role is to decide in which direction the machine should move next, and to initiate the movement. When the machine is at rest, the Commander may choose among three possible \emph{final destinations}, labeled $D_1$, $D_2$, and $D_3$ in Figure~\ref{f:machine}. The segments $CD_1$, $CD_2$, and $CD_3$ all have the same length $\mu$, with $\lambda\ll\mu\ll d$, and form angles of $120^\circ$ with one another, in such a way that $D_1$ is collinear with $R$ and $C$.

Around the center of each segment $CD_i$ there is a \emph{midway triangle} $\tau_i$, drawn in gray in Figure~\ref{f:machine}. This is an isosceles triangle of height $\lambda$ whose base lies on $CD_i$ and has length $\lambda$ as well. When the Commander decides that its final destination is $D_i$, it moves along the segment $CD_i$, but it takes a small detour in the midway triangle $\tau_i$, as we will explain shortly.

\paragraph*{Structure of the algorithm}
Algorithm~\ref{a:1} is the program that each element of the basic TuringMobile executes every time it computes its next destination point.

\begin{algorithm}
\caption{Basic TuringMobile in $\mathbb R^2$\label{a:1}}
\begin{algorithmic}[1]
\State Identify \emph{Commander}, \emph{Number}, \emph{Reference} (located in $C$, $N$, $R$, respectively)\label{r:1}
\If{I am \emph{Commander}}\label{r:2}
	\State Compute \emph{Virtual Commander} $C'$ (based on $R$ and $N$) and points $A_i$, $S_i$, $S'_i$, $B_i$, $D_i$\label{r:3}
	\If{I am in $C'$}\ Choose final destination $D_i$ and move to $A_i$\label{r:4}
	\ElsIf{$\exists i\in\{1,2,3\}$ s.t.\ I am on segment $C'A_i$ but not in $A_i$}\ Move to $A_i$\label{r:5}
	\ElsIf{$\exists i\in\{1,2,3\}$ s.t.\ I am in $A_i$}\label{r:5:1}
		\State Move to point $P$ on segment $S_iS'_i$ such that $\overline{PS_i} = f(\overline{NQ})$\label{r:5:2}
	\ElsIf{$\exists i\in\{1,2,3\}$ s.t.\ I am in triangle $A_iS_iS'_i$ but not on segment $S_iS'_i$}\label{r:6}
		\State Move to the intersection of segment $S_iS'_i$ with the extension of line $A_iC$\label{r:7}
	\ElsIf{$\exists i\in\{1,2,3\}$ s.t.\ I am on $S_iS'_i$ \textbf{and} $\overline{NQ}=\overline{CS_i}$}\ Move to $B_i$\label{r:8}
	\ElsIf{$\exists i\in\{1,2,3\}$ s.t.\ I am in triangle $B_iS_iS'_i$ but not in $B_i$}\ Move to $B_i$\label{r:9}
	\ElsIf{$\exists i\in\{1,2,3\}$ s.t.\ I am on segment $B_iD_i$ but not in $D_i$}\ Move to $D_i$\label{r:10}
	\EndIf
\ElsIf{I am \emph{Number}}\label{r:11}
	\If{$\overline{CR}=d+\mu$ \textbf{or} $\overline{CR}=d'$}\label{r:12}
		\State Compute \emph{Virtual Commander} $C'$ (based on $C$ and $R$) and points $D'_i$\label{r:13}
		\If{$\overline{CR}=d+\mu$ \textbf{and} I am not in $D'_1$}\ Move to $D'_1$\label{r:14}
		\ElsIf{$\overline{CR}=d'$ \textbf{and} $\angle NRC>90^\circ$ \textbf{and} I am not in $D'_2$}\ Move to $D'_2$\label{r:15}
		\ElsIf{$\overline{CR}=d'$ \textbf{and} $\angle NRC<90^\circ$ \textbf{and} I am not in $D'_3$}\ Move to $D'_3$\label{r:16}
		\EndIf
	\Else\label{r:17}
		\State Compute \emph{Virtual Commander} $C'$ (based on $R$ and $N$) and points $S_i$, $S'_i$\label{r:18}
		\If{$\exists i\in\{1,2,3\}$ s.t.\ $C$ is on segment $S_iS'_i$}\label{r:19}
			\State Move to point $P$ on segment $QQ'$ such that $\overline{PQ}=\overline{CS_i}$\label{r:20}
		\EndIf
	\EndIf
\ElsIf{I am \emph{Reference}}\label{r:21}
	\If{\emph{Commander} and \emph{Number} are not tasked with moving (based on the above rules)}\label{r:22}
		\State $\gamma=$ circle centered in $C$ with radius $d$\label{r:23}
		\State $\gamma'=$ circle with diameter $CN$\label{r:24}
		\State Move to the intersection of $\gamma$ and $\gamma'$ closest to $R$\label{r:25}
	\EndIf
\EndIf
\end{algorithmic}
\end{algorithm}

Since the robots are anonymous, they first have to determine their roles, i.e., who is the Commander, etc. (line~\ref{r:1} of the algorithm). We make the assumption that there exists a disk of radius $\varepsilon$ containing only the TuringMobile (close to its center) and no other robot. Using the fact that the two closest robots must be the Commander and the Reference robot and that the two farthest robots must be the Commander and the Number robot, it is then easy to determine who is who (these properties will be preserved throughout the execution, as we will see in the next section).

Once it has determined its role, each robot executes a different branch of the algorithm (cf.~lines~\ref{r:2},~\ref{r:11}, and~\ref{r:21}). The general idea is that, when the Commander realizes that the machine is in its rest position, it decides where to move next, i.e., it chooses a final destination $D_i$. This choice is based on the number $r$ stored in the machine's ``memory'' (i.e., the number encoded by $\overline{RN}$), the relative positions of the visible robots external to the machine, and also on the application, i.e., the specific program that the TuringMobile is executing.

When the Commander has decided its final destination $D_i$, the entire machine moves by the vector $\overrightarrow{CD_i}$, and the Number robot also updates its distance from the Reference robot to represent a different real number $r'$. Again, this number is computed based on the number $r$ the machine was previously representing, the relative positions of the visible robots external to the machine, and the specific program: in general, the new distance between $N$ and $Q$ is a function $f$ of the old distance.

When this is done, the machine is in its rest position again, so the Commander chooses a new destination, and so on, indefinitely.

\paragraph*{Coordinating movements}
Note that it is not possible for all three robots to translate by $\overrightarrow{CD_i}$ at the same time, because they are non-rigid and asynchronous. If the scheduler stops them at arbitrary points during their movement, after the structure of the machine has been destroyed, they will be incapable of recovering all the information they need to resume their movement (recall that they are oblivious and they have to compute a destination point from scratch every time).

To prevent this, the robots employ various coordination techniques. First the Commander moves to the middle triangle $\tau_i$, and precisely to its base vertex $A_i$, as shown in Figure~\ref{f:synch}(a) (cf.~line~\ref{r:5} of Algorithm~\ref{a:1}). Then it positions itself on the altitude $S_iS'_i$, in such a way as to indicate the new number $r'$ that the machine is supposed to represent. That is, the Commander picks the point on $S_iS'_i$ at distance $f(\overline{NQ})$ from $S_i$ (lines~\ref{r:5:1} and~\ref{r:5:2}). Even if it is stopped by the scheduler before reaching such a point, it can recover its destination by simply drawing a ray from $A_i$ to its current position and intersecting it with $S_iS'_i$ (lines~\ref{r:6} and~\ref{r:7}).

When the Commander has reached $S_iS'_i$, it waits to let the Number robot adjust its position on the segment $QQ'$ to match that of the Commander on $S_iS'_i$, as in Figure~\ref{f:synch}(b) (lines~\ref{r:19} and~\ref{r:20}). This effectively makes the Number robot represent the new number $r'$. Note that the Number robot can do this even if it is stopped by the scheduler several times during its march, because the Commander keeps reminding it of the correct value of $r'$: since $r'$ depends on the old number $r$, the Number robot would be unable to re-compute $r'$ after it has forgotten $r$.

Once the Number robot has reached the correct position on $QQ'$, the Commander starts moving again (line~\ref{r:8}) and finally reaches $D_i$ while the other robots wait, as in Figure~\ref{f:synch}(c) (lines~\ref{r:9} and~\ref{r:10}).

When the Commander has reached $D_i$, the Number robot realizes it and makes the corresponding move (lines~\ref{r:12}--\ref{r:16}) while the other two robots wait. The destination point of the Number robot is $D'_i$, as shown in Figure~\ref{f:machine}. Finally, when the Number robot is in $D'_i$, the Reference robot realizes it and makes the final move to bring the TuringMobile back into a rest position (lines~\ref{r:21}--\ref{r:25}). Note that the number $r'$ stored in the machine is not erased after these final movements, because both the Number and Reference robot move by the same vector.

\begin{figure}[ht]
\begin{center}
\includegraphics[width=\linewidth]{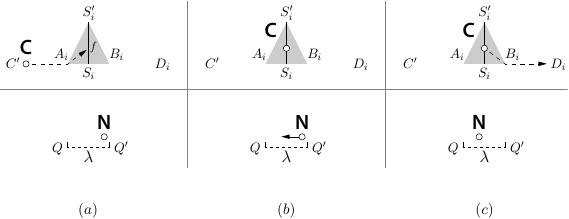}
\end{center}
\caption{Coordinated movement of the Commander and the Number robot, to cope with their asynchronous and non-rigid nature. (a) The Commander stops on $S_iS'_i$, recording the number that the machine is going to represent next (which is a function $f$ of the number currently represented by the Number robot). (b) The Number robot moves within $QQ'$ to match the Commander's position in $S_iS'_i$. (c) Finally, the Commander reaches $D_i$.}
\label{f:synch}
\end{figure}

\paragraph*{Computing the Virtual Commander}
After the Commander has left its rest position and is on its journey to $D_i$, the TuringMobile loses its initial shape, and identifying the $D_i$'s and the midway triangles becomes a non-trivial task. To simplify this task, the robots try to guess where the Commander's original rest position may have been by computing a point called the \emph{Virtual Commander} $C'$.

Assuming that the Reference and Number robots have not moved from their rest positions, the Virtual Commander is easily computed: draw a line $\ell$ through $R$ perpendicular to $RN$; then, $C'$ is the point on $\ell$ at distance $d$ from $R$ that is closest to $C$. Once we have $C'$, we can construct the points $D_i$ with respect to $C'$ (in the same way as we did in Figure~\ref{f:machine} with respect to $C$). This technique is used by Algorithm~\ref{a:1} at lines~\ref{r:3} and~\ref{r:18}.

In the special case where the Commander has reached its final destination $D_i$ and the Reference robot has not moved from its rest position (but perhaps the Number robot has moved), the Virtual Commander can also be computed. This situation is recognized because the distance between the Commander and the Reference robot is either maximum (i.e., $d+\mu$) or minimum (i.e., $d'=\sqrt{d^2+\mu^2-d\mu}$, by the law of cosines), as Figure~\ref{f:machine} shows. If the distance is maximum, then $C$ must coincide with $D_1$; otherwise, $C$ coincides with $D_2$ (if the angle $\angle NRC$ is obtuse) or $D_3$ (if the angle $\angle NRC$ is acute). Since we know the position of $R$ and one of the $D_i$'s, it is then easy to determine the other $D_i$'s. This technique is used at line~\ref{r:13}.

\paragraph*{The Reference robot's behavior}
To know when it has to start moving, the Reference robot simply executes Algorithm~\ref{a:1} from the perspective of the Commander and the Number robot: if neither of them is supposed to move, then the Reference robot starts moving (line~\ref{r:22}).

We have seen that the Number robot can determine its destination $D'_i$ solely by looking at the positions of $C$ and $R$, which remain fixed as it moves. For the Reference robot the destination point is not as easy to determine, because the distance between $C$ and $N$ varies depending on what number is stored in the TuringMobile.

However, the Reference robot knows that its move must put the TuringMobile in a rest position. The condition for this to happen is that its destination point be at distance $d$ from $C$ (line~\ref{r:23}) and form a right angle with $C$ and $N$ (line~\ref{r:24}). There are exactly two such points in the plane, but one of them has distance much greater than $\mu$ from $R$, and hence the Reference robot will pick the other (line~\ref{r:25}).

As the Reference robot moves toward such a point, all the above conditions must be preserved, due to the asynchronous and non-rigid nature of the robots. This is not a trivial requirement, and we will prove that it is indeed fulfilled in Section~\ref{s4}.

\subsection{Complete Implementation}\label{s3:2}
We have shown how to implement a basic component of the TuringMobile in $\mathbb R^2$ consisting of three robots: a Commander, a Number, and a Reference. Te basic component is able to rigidly move by a fixed distance $\mu$ in three fixed directions, $120^\circ$ apart from one another. It can also store and update a single real number.

\paragraph*{Planar layout}
We can obtain a full-fledged TuringMobile in $\mathbb R^2$ by putting several tiny copies of the basic component side by side as in Figure~\ref{f:vis}.

\begin{figure}[ht]
\begin{center}
\includegraphics[scale=1]{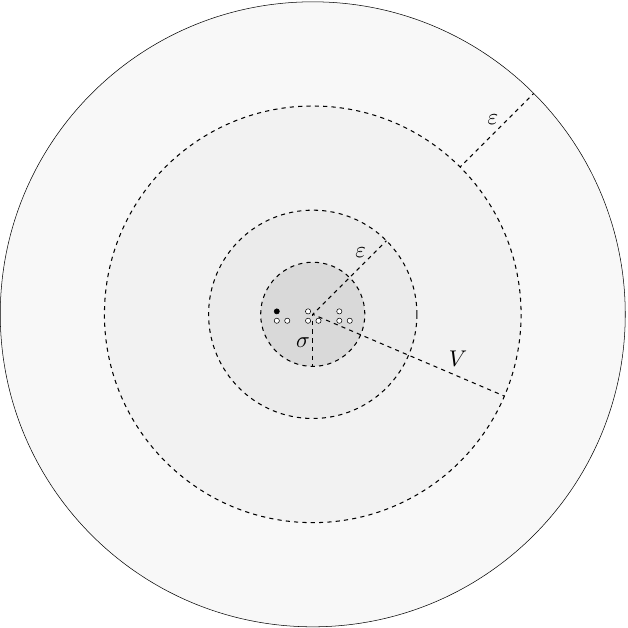}
\end{center}
\caption{Sketch of a complete TuringMobile, not drawn to scale ($\varepsilon$ and $\sigma$ should be smaller). All robots are in the central disk of radius $\sigma$; the one drawn in black is the Leader}
\label{f:vis}
\end{figure}

For the machine to work, we stipulate that there exists a disk of radius $\sigma$ that contains all the robots constituting the TuringMobile and no extraneous robot, where $\sigma\ll\varepsilon$. The distance between two consecutive basic components of the TuringMobile is roughly $s$, where $d\ll s\ll \sigma$. This makes it easy for the robots to tell the basic components apart and determine the role of each robot within its basic component.

Since a basic component of the TuringMobile is a scalene triangle, which is chiral, all its members implicitly agree on a clockwise direction even if they have different handedness. Similarly, all robots in the TuringMobile agree on a ``leftmost'' basic component, whose Commander is said to be the \emph{Leader} of the whole machine.

\paragraph*{Coordinated movements}
All the basic components of the TuringMobile are always supposed to agree on their next move and proceed in a roughly synchronous way. To achieve this, when all the basic components are in a rest position, the Leader decides the next direction among the three possible, and executes line~\ref{r:4} of Algorithm~\ref{a:1}. Then all the other Commanders see where the Leader is going, and copy its movement.

When all the Commanders are in their respective $A_i$'s, they execute line~\ref{r:5:2} of the algorithm, and so on. At any time, each robot executes a line of the algorithm only if all its homologous robots in the other basic components of the TuringMobile are ready to execute that line or have already executed it; otherwise, it waits.

When the last Reference robot has completed its movement, the machine is in a rest position again, and the coordinated execution repeats with the Leader choosing another direction, etc.

\paragraph*{Simulating a non-oblivious rigid robot}
Let a program for a rigid robot $\mathcal R$ in $\mathbb R^2$ with $k$ persistent registers and visibility radius $V$ be given. We want the TuringMobile described above to act as $\mathcal R$, even though its constituting robots are non-rigid and oblivious.

Our TuringMobile consists of $2+k$ basic components, each dedicated to memorizing and updating one real number. These $2+k$ numbers are the $x$ coordinate and the $y$ coordinate of the destination point of $\mathcal R$ and the contents of the $k$ registers of $\mathcal R$. We will call the first two numbers the \emph{$x$ variable} and the \emph{$y$ variable}, respectively.

When the TuringMobile is in a rest position, its $x$ and $y$ variables represent the coordinates of the destination point of $\mathcal R$ relative to the Leader of the machine. Whenever the TuringMobile moves by $\mu$ in some direction, these values are updated by subtracting the components of an appropriate vector of length $\mu$ from them. Of course, this update is computed by the Commanders of the first two basic components of the machine, which communicate it to their respective Number robots, as explained in Section~\ref{s3:1}.

Let $P$ be the destination point of $\mathcal R$. Since the TuringMobile can only move by vectors of length $\mu$ in three possible directions, it may be unable to reach $P$ exactly. So, the Leader always plans the next move trying to reduce its distance from $P$ until this distance is at most $2\sigma$ (this is possible because $\mu\ll d\ll \sigma$).

When the Leader is close enough to $P$, it ``pretends'' to be in $P$, and the TuringMobile executes the program of $\mathcal R$ to compute the next destination point. Recall that the visibility radius of $\mathcal R$ is $V$, and that of the robots of the TuringMobile is $V+\varepsilon$. Since $\sigma\ll \varepsilon$, each member of the TuringMobile can therefore see everything that would be visible to $\mathcal R$ if it were in $P$, and compute the output of the program of $\mathcal R$ independently of the other members. The only thing it should do when it executes the program of $\mathcal R$ is subtract the values of the $x$ and $y$ variables to everything it sees in its snapshot, discard whatever has distance greater than $V$ from the center of the snapshot, and of course discard the robots of the TuringMobile and replace them with a single robot in the center of the snapshot (representing the robot itself). Then, the robots that are responsible for updating the $x$ and $y$ variables add the relative coordinates of the new destination point of $\mathcal R$ to these variables. Similarly, the robots responsible for updating the $k$ registers of $\mathcal R$ do so.

Note that the above simulation works also in the special case where $\mathcal R$ is supposed to update its registers without moving. The Leader will move by $\mu$ in any direction, followed by the entire machine (because this is the only way the TuringMobile can update its registers), and the $x$ and $y$ variables will be updated with the old position of the Leader.

\paragraph*{Restrictions}
The above TuringMobile correctly simulates $\mathcal R$ under certain conditions. The first one is that, if all robots are indistinguishable, then no robot extraneous to the TuringMobile may get too close to it (say, within a distance of $\sigma$ of any of its members). This kind of restriction cannot be dispensed with: whatever strategy a team of oblivious robots employs to simulate a single non-oblivious robot's behavior is bound to fail if extraneous robots join the team creating ambiguities between its members. Nevertheless, the restriction can be removed if we stipulate that the members of a TuringMobile are distinguishable from all other robots.

Another difficulty comes from the fact that, if the TuringMobile is made of more than one basic component and its Commanders are all in their respective $A_i$'s and ready to update the values represented by the machine, they may get their snapshots at different times, due to asynchrony. If the environment moves in the meantime, the snapshots they get are different, and this may cause the machine to compute an incorrect destination point or put inconsistent values in its simulated registers.

There are several possible solutions to this problem: we will only mention two trivial ones. We could for instance assume the Commanders to be \emph{synchronous}, that is, make the scheduler activate them in such a way that all of them take their snapshots at the same time. This way, all Commanders get compatible snapshots and compute consistent outputs. Another possible solution is to make the TuringMobile operate in an environment where everything else is static, i.e., no moving entities are present other than the TuringMobile's members.

We stress that these restrictions make sense if a perfect simulation of $\mathcal R$ is saught. As we will see in Section~\ref{s5}, there are several other applications of the TuringMobile technique where no such restrictions are required.

\paragraph*{Higher dimensions}
Let us now generalize the above construction of a planar TuringMobile to $\mathbb R^m$, for any $m\geq 2$. We start with the same TuringMobile $\mathcal M$ with $2+k$ basic components laid out on a plane $\gamma\subset\mathbb R^m$. Since $\mathcal M$ has only two basic components for the $x$ and $y$ variables, we will add $m-2$ basic components to it, positioned as follows.

Let vectors $v_1$ and $v_2$ be two orthonormal generators of $\gamma$, and let us complete $\{v_1, v_2\}$ to an orthonormal basis $\{v_1, v_2, \dots, v_m\}$ of $\mathbb R^m$. Now, for all $i\in\{3,4,\dots,m\}$, we make a copy of the basic component of $\mathcal M$ containing the Leader, we translate it by $s\cdot v_i$, and we add it to the TuringMobile ($s$ is the same value used in the construction of the planar TuringMobile at the beginning of Section~\ref{s3:2}). Note that the Leader of this new TuringMobile $\mathcal M'$ is still easy to identify, as well as the plane $\gamma$ when $\mathcal M'$ is at rest.

Clearly, $m$ basic components allow the machine to record a destination point in $\mathbb R^m$, as opposed to $\mathbb R^2$. Additionally, the positions of the basic components with respect to $\gamma$ give the machine an $m$-dimensional sense of direction.

For instance, say that $m=3$, $\gamma$ is a horizontal plane, and $v_3$ points upward. Then, when the Leader decides to move up, it moves by $\mu$ in the direction of the basic component of the TuringMobile not lying on $\gamma$ (first stopping in a midway triangle, as per Algorithm~\ref{a:1}). The rest of $\mathcal M'$ can reconstruct the direction of $v_3$, for instance by inspecting the relative positions of the Reference robots, and move as required when the time comes. In the subsequent moves, the Leader still retains a consistent notion of up and down, and can therefore lead $\mathcal M'$ close enough to the destination point.

The same restrictions that apply to the planar TuringMobile as a simulator of course extend to its higher-dimensional versions. The next section will be devoted to proving the following theorem, which summarizes the results obtained so far.

\begin{theorem}\label{t:2}
Under the aforementioned restrictions, a rigid robot in $\mathbb R^m$ with $k$ persistent registers and visibility radius $V$ can be simulated by a team of $3m+3k$ non-rigid oblivious robots in $\mathbb R^m$ with visibility radius $V+\varepsilon$.
\end{theorem}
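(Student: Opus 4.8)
The plan is to prove correctness in three stages: (i) the basic three-robot component in $\mathbb R^2$ faithfully realizes a rigid machine that moves by $\mu$-steps in three fixed $120^\circ$-apart directions and stores one real number; (ii) the side-by-side composition of $2+k$ such components in the plane, together with the barrier-style synchronization of homologous robots, faithfully realizes the rigid $k$-register robot $\mathcal R$; (iii) the $m$-dimensional layout with $m-2$ extra components extends this to $\mathbb R^m$. Throughout, everything rests on fixing the scale hierarchy $\lambda\ll\mu\ll d\ll s\ll\sigma\ll\varepsilon\ll V$ appropriately.

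For stage (i), I would enumerate the finitely many \emph{phases} of one move cycle, one per guard of Algorithm~\ref{a:1}: rest; Commander on $C'A_i$; Commander inside triangle $A_iS_iS'_i$; Commander waiting on $S_iS'_i$; Number adjusting on $QQ'$; Commander inside triangle $B_iS_iS'_i$; Commander on $B_iD_i$; Number moving to $D'_i$; Reference returning to rest. The core is a loop invariant, proved by induction on the number of completed sub-moves: in every reachable configuration (a) the three pairwise distances are distinct and ordered $\overline{CR}<\overline{RN}<\overline{CN}$ up to $O(\mu)$ perturbations (at rest they are $d<2d<\sqrt5\,d$), so line~\ref{r:1} recovers the roles correctly, and the Virtual Commander computed at lines~\ref{r:3},~\ref{r:13},~\ref{r:18} is exactly the Commander's rest position of the current cycle; (b) whenever a robot's guard holds, its recomputed destination equals the one it would have computed at the start of the current sub-move, so a non-rigid stop never changes the target — immediate for the Commander on $C'A_i$ or $B_iD_i$ (a fixed point reconstructible from $A_i$ resp.\ $B_i$ and the current position, lines~\ref{r:6}--\ref{r:7}), for the Commander on $S_iS'_i$ (the point $P$ with $\overline{PS_i}=f(\overline{NQ})$ depends only on the still-unmoved $N,R$), for the Number on $QQ'$ (depends only on the Commander's now-fixed position on $S_iS'_i$), for the Number moving to $D'_i$ (depends only on the fixed $C,R$), and for the Reference (the near intersection of the two fixed circles $\gamma,\gamma'$, which remains the near one since the far intersection is $\Theta(d)$ away while the Reference's whole trajectory has length $O(\mu+\lambda)\ll d$); (c) at each instant the guards of exactly one active ``layer'' are satisfied, so the other two robots wait. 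Combined with the fairness of the scheduler and the $\delta$-guarantee (each sub-move terminates after finitely many activations), this shows every cycle terminates and leaves the machine at rest, translated by the chosen $\overrightarrow{CD_i}$, with $\overline{RN}$ updated from $\alpha(r)$ to $\alpha(f(r))$; hence the component behaves exactly as the claimed rigid one-register machine.

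For stage (ii), the $d\ll s\ll\sigma$ spacing and the chirality of the scalene basic component make the partition into components, the role of each robot, and the leftmost component (hence the Leader) unambiguous at every instant, since no move displaces a robot by more than $O(\mu)\ll s$. Correctness then reduces to: (a) \emph{barrier liveness} — the condition ``execute a line only once all homologs are ready or done'' is readable from a single snapshot (the homologs' phases are encoded in their positions) and is deadlock-free because, by the $\delta$-argument, the slowest component always eventually advances; (b) \emph{faithful view reconstruction} — since $\sigma\ll\varepsilon$, when the Leader is within $2\sigma$ of the simulated destination $P$ each TuringMobile robot sees a superset of what $\mathcal R$ at $P$ sees, so after translating by the $x,y$-variables, clipping to radius $V$, and collapsing the machine's own robots to a single point at the center it recovers exactly $\mathcal R$'s view and computes $\mathcal R$'s output identically; (c) \emph{reachability} — the $\mu$-steps generate a triangular lattice of mesh $\mu$, so a greedy ``reduce distance to $P$'' policy brings the Leader within $\mu<2\sigma$ of $P$ in finitely many steps while the $x,y$-update keeps those variables equal to $P$ minus the Leader's position; the degenerate ``update registers without moving'' case is absorbed by one dummy $\mu$-move, as described. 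Stage (iii) then adds the $m-2$ displaced copies: the rest shape still determines the plane $\gamma$, an orthonormal frame $v_1,\dots,v_m$ (from the relative positions of the extra Reference robots), and the Leader, so all robots agree on the $m$ move directions and the stage-(i)/(ii) arguments apply coordinate-wise, with $m$ basic components now encoding a destination in $\mathbb R^m$. The stated restrictions (no extraneous robot within $\sigma$, or a distinguishable TuringMobile; synchronous Commanders or a static environment) are precisely what (ii)(a)--(b) require, giving the theorem with $3m+3k$ robots.

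The step I expect to be the main obstacle is part (b) of the stage-(i) invariant: establishing that each robot's destination is genuinely invariant along its adversarially truncated trajectory, and simultaneously that the accumulation of the $O(\mu)$ and $O(\lambda)$ perturbations never flips a comparison used for role identification or for selecting between the two circle intersections (line~\ref{r:25}) or between the two ``farthest'' candidates. Making this rigorous forces one to commit to explicit bounds on how far each robot can stray in each phase and to verify that the chosen scale separation makes all the relevant inequalities strict with room to spare; once that bookkeeping is in place, the termination and composition arguments are routine.
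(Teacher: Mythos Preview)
Your three-stage plan and the invariant (a)--(c) decomposition are essentially the paper's own structure: the paper isolates a fundamental lemma for one basic component, proves it by walking through five phases (your phases collapse to theirs), and then dismisses the composition across components and across dimensions as routine once the basic lemma is in hand. So the high-level route is right.

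Two places in your stage-(i) sketch understate or misidentify where the real work lies. First, in (b) you write that for the Number robot moving to $D'_i$ the destination ``depends only on the fixed $C,R$''. That is not quite true: the \emph{choice} among $D'_1,D'_2,D'_3$ is governed by lines~\ref{r:14}--\ref{r:16}, and the $D'_2$/$D'_3$ distinction hinges on whether $\angle NRC$ is obtuse or acute, which changes as $N$ moves. In the $D'_3$ case the angle $\angle NRC$ increases toward $90^\circ$ along the way, and one must actually prove it never reaches $90^\circ$; the paper does this with a law-of-sines comparison (showing $\theta>\theta'$ in its Figure~\ref{f:angle}), not by a generic ``fixed $C,R$'' argument.

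Second, you locate the main obstacle in (b), but in the paper the longest and most delicate part is (c) during Phase~5, when the Reference robot is moving. The point is that the Virtual Commander $C'$ used by the Commander and the Number robot is computed from $R$ and $N$, so as $R$ slides toward its target the whole scaffold $C',D_i,\tau_i,S_iS'_i$ slides with it. One must show that the \emph{actual} Commander position $C$ never lands on any of these moving segments or midway triangles, so that none of lines~\ref{r:5}--\ref{r:10} or~\ref{r:19} fires. The paper handles this by a three-way case split on which $D_i$ was reached, using separate geometric arguments (circle containment for $D_1$, a rotation-about-$R$ argument and explicit Cartesian coordinates for $D_2$ and $D_3$). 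Your framing of this as ``$O(\mu)$ and $O(\lambda)$ perturbations'' and ``bookkeeping'' is optimistic: the displacement of $R$ is a full $\mu$, the induced motion of $C'$ is comparable, and the exclusion of $C$ from the moving $C'D_i$ and $\tau_i$ is not a strict-inequality-with-slack matter but requires the specific geometry of the $60^\circ$ layout. By contrast, your treatment of the Reference robot's own destination (near intersection of $\gamma,\gamma'$) and of stages~(ii)--(iii) matches the paper and is fine.
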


\section{Correctness}\label{s4}
This section is devoted to the proof of Theorem~\ref{t:2}. The crux of the proof is the following lemma, which states that a single basic component of the TurnigMobile, as described in Section~\ref{s3:1}, works as intended.

\paragraph*{The fundamental lemma}

\begin{lemma}\label{l:3}
Let a TuringMobile in $\mathbb R^2$ consisting of a single basic component execute Algorithm~\ref{a:1}, and assume that throughout the execution no object extraneous to the machine approaches any of its members by less than $\sigma$. If at some point in time $t$ the TuringMobile is in a rest position and none of its members is moving anywhere, then, at a point in time $t'>t$, the TuringMobile is in a rest position again, its Commander and Reference robot have translated by a vector of length $\mu$ in one of three predefined directions (as in Figure~\ref{f:machine}), its Number robot has correctly updated its distance from the Reference robot (according to some function $f$ of the previous distance and the TuringMobile's surrounding environment as observed by the Commander in a single snapshot taken between times $t$ and $t'$), and no member of the TuringMobile is moving anywhere.
\end{lemma}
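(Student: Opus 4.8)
The plan is to prove Lemma~\ref{l:3} by tracing the execution of Algorithm~\ref{a:1} through a fixed sequence of \emph{phases}, showing that despite the asynchronous non-rigid scheduler the configuration is always unambiguous, each phase terminates, and the phases compose to a single translated rest position with a correctly updated number. The only external facts I would need are the parameter separations $\lambda\ll\mu\ll d\ll\varepsilon$ and $\sigma\ll\varepsilon$, and the fairness rule that a robot which keeps computing the same destination reaches it in finitely many activations.

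I would begin by fixing the invariants that hold in every configuration reachable from the starting rest position: the three robots stay inside a disk of radius $O(\varepsilon)$, the Commander--Reference pair is the unique closest pair and the Commander--Number pair the unique farthest pair, and the basic component stays scalene with fixed chirality. Together with the hypothesis that nothing extraneous comes within $\sigma$, these justify the role assignment of line~\ref{r:1}; combined with the $\overline{CR}$-based tests of lines~\ref{r:12}--\ref{r:16} they let each robot deduce the current phase from a single snapshot. The second ingredient is the claim that the Virtual Commander computed in lines~\ref{r:3} and~\ref{r:18} is \emph{exact} while the Commander is in transit: as long as the Commander has not reached its chosen $D_i$, the Reference robot is kept still by the guard of line~\ref{r:22} and the Number robot moves only once the Commander sits on some $S_iS'_i$; and once the Number robot does move, it moves within $QQ'$, which lies on the ray $RN$, so $\widehat{RN}$ --- and hence $C'$ and all landmarks $A_i,S_i,S'_i,B_i,D_i$ reconstructed from it --- stays constant and equal to the original rest configuration. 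This is what turns each such phase's destination into a \emph{fixed point}, so that fairness yields termination.

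I would then walk through the phases in order: Commander $C'\!\to\!A_i$ (lines~\ref{r:4}--\ref{r:5}); Commander $A_i\!\to\! S_iS'_i$, recording $\overline{PS_i}=f(\overline{NQ})$ from a single snapshot taken at $A_i$ (lines~\ref{r:5:1}--\ref{r:7}); Commander waits while the Number robot moves within $QQ'$ until $\overline{NQ}=\overline{CS_i}$ (lines~\ref{r:19}--\ref{r:20}); Commander $S_iS'_i\!\to\! B_i\!\to\! D_i$ (lines~\ref{r:8}--\ref{r:10}); Number robot moves to $D'_i$ once the Commander is recognised at $D_i$ by $\overline{CR}\in\{d+\mu,d'\}$ (lines~\ref{r:12}--\ref{r:16}); Reference robot makes the closing move to the intersection of $\gamma$ and $\gamma'$ nearest $R$ (lines~\ref{r:22}--\ref{r:25}). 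For each phase I must check that the moving robot's straight path lies entirely inside the region of the branch meant to handle it --- so that no spurious branch fires and a robot stopped midway recovers the very same destination, e.g.\ by intersecting ray $A_iC$ with $S_iS'_i$ (lines~\ref{r:6}--\ref{r:7}) or by the Commander continually ``reminding'' the Number robot of $\overline{CS_i}$ --- and that the two idle robots are indeed not tasked. The closing move is handled by Thales: the point chosen in line~\ref{r:25}, lying on both $\gamma$ (radius $d$ about $C$) and $\gamma'$ (diameter $CN$), automatically satisfies $\overline{CR}=d$ and $\angle CRN=90^\circ$, hence is a genuine rest position, with $\overline{RN}=\sqrt{\overline{CN}^2-d^2}$ falling inside the admissible window $[2d-\lambda/2,\,2d+\lambda/2]$ by the parameter choices; the other, far intersection is excluded by line~\ref{r:25}. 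Composing the phases, $C$ and $R$ are translated by a length-$\mu$ vector in one of the three predefined directions, the new $\overline{NQ}$ is $f$ of the old one for the single snapshot of line~\ref{r:5:2}, and no robot is tasked --- which is the statement of the lemma.

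The main obstacle is proving that the adversarial, asynchronous, non-rigid scheduler can never stop a robot where an unintended branch fires. The hardest sub-case is the Reference robot's closing march, during which $R$ --- and therefore the Virtual Commander $C'$ it induces --- is itself moving, so one must show that the idle Commander parked at $D_i$ satisfies none of the tests of lines~\ref{r:4}--\ref{r:10} (in particular neither ``I am in $C'$'' nor ``I am on segment $C'A_j$'') until the very last instant, and symmetrically that the idle Number robot satisfies no test of lines~\ref{r:12}--\ref{r:20}. This is a purely geometric statement, but it is exactly where the separations $\lambda\ll\mu\ll d$ and the detour through the midway triangles are used: because $N$ sits at distance $\approx 2d$ from everything, the line $RN$ barely rotates as $R$ moves, so $C'$ sweeps a short arc that stays angularly away from all three reconstructed directions $\widehat{C'D_j}$, while $\overline{RD_i}$ stays bounded away from $d$ until $R$ reaches its goal. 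I expect the bulk of Section~\ref{s4} to consist of exactly these case distinctions, made quantitative via the law of cosines and small-angle estimates bounding the displacement of $C'$ and the $D_j$'s when $R$ moves by at most $O(\mu)$.
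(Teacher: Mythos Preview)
Your outline is correct and matches the paper's proof almost step for step: the same five-phase decomposition, the same Virtual-Commander stability argument (valid because $R$ and the direction $\widehat{RN}$ stay fixed while the Commander is in transit), and the same identification of Phase~5, the Reference robot's closing march, as the crux requiring the detailed geometric case analysis you anticipate. The one subtlety you underplay is in Phase~4: when the Commander is parked at $D_3$ and the Number robot marches toward $D'_3$, the angle $\angle NRC$ \emph{increases}, so the test of line~\ref{r:16} ($\angle NRC<90^\circ$) is not automatically preserved along the path --- the paper disposes of this with a separate law-of-sines comparison (Figure~\ref{f:angle}) that your write-up would need to include.
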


A robot is ``not moving anywhere'' at time $t$ if it has already reached the last destination point that it has computed before time $t$, and it has not taken the next snapshot before time $t$ (although it may be taking the snapshot exactly at time $t$).

Note that for Lemma~\ref{l:3} we do not make all the restrictions of Theorem~\ref{t:2}, because we do not have to synchronize several basic components of the machine.

If Lemma~\ref{l:3} holds, then a TuringMobile in $\mathbb R^2$ with only one basic component correctly performs a single step of the execution, rigidly moving by $\mu$ and updating the real number that it is storing. By repeatedly applying this lemma, we have the correctness of the entire execution of a basic component.

Proving Theorem~\ref{t:2} is then a simple matter, because the coordination of several basic components in $\mathbb R^m$ for $m\geq 2$ is done as described in Section~\ref{s3:2}, and does not pose any problems.

\paragraph*{Proof structure}

Let us prove Lemma~\ref{l:3}. The intended behavior of the machine is for the execution to go through the following five phases in chronological order:
\begin{enumerate}
\item The Commander moves to $S_iS'_i$ (lines~\ref{r:4}--\ref{r:7} of Algorithm~\ref{a:1});
\item The Number robot moves within $QQ'$ (lines~\ref{r:19} and~\ref{r:20});
\item The Commander moves to $D_i$ (lines~\ref{r:8}--\ref{r:10});
\item The Number robot moves to $D'_i$ (lines~\ref{r:12}--\ref{r:16});
\item The Reference robot moves, bringing the machine in a rest position (lines~\ref{r:22}--\ref{r:25}).
\end{enumerate}
During each phase, only one robot is supposed to move, while the other two wait. If we can ensure this behavior, then Lemma~\ref{l:3} follows.

Recall that the robots constituting the TuringMobile are asynchronous and non-rigid. This means that we have to guarantee two things for each of the above phases:
\begin{itemize}
\item If a robot moves as per phase~$i$ and another robot sees it at any time before it has finished (due to asynchrony), the second robot does not mistakingly think that the current phase is not $i$, and hence it waits.
\item If a robot moves as per phase~$i$ and the scheduler stops it before it has reached its destination (due to non-rigidity), the robot takes another snapshot, and correctly resumes phase~$i$.
\end{itemize}

\paragraph*{Phase 1}
If the assumptions of Lemma~\ref{l:3} are satisfied, the first robot to take a snapshot after time $t$ (or exactly at time $t$) will se a TuringMobile in a rest position. As a consequence, the Virtual Commander coincides with the Commander, and therefore only the Commander is allowed to move toward some $A_i$.

While the Commander moves, its distance from the Reference robot never gets as small as $d'$ or as large as $d+\mu$ (cf.~Figure~\ref{f:machine}), hence the conditions of line~\ref{r:12} of Algorithm~\ref{a:1} are never satisfied. Also, the Virtual Commander computed with respect to $R$ and $N$ always coincides with the starting position $C'$ of the Commander, which means that the Commander will be seen on the segment $C'A_i$, implying that only the Commander will be allowed to move.

Since the Commander approaches $A_i$ by at least $\delta$ at every movement (cf.~Section~\ref{s2:2}), it eventually reaches it. When it reaches $A_i$, it chooses a destination point on $S_iS'_i$ based on a single snapshot of the environment (as required by Lemma~\ref{l:3}): once a destination point has been chosen, it never changes even if the Commander is stopped before reaching it, due to lines~\ref{r:6} and~\ref{r:7}. Since the Number robot and the Reference robot have not moved yet, the number stored in the machine is still the same as it was a time $t$, and therefore the point on $S_iS'_i$ chosen by the Commander is correctly computed by applying function $f$ to $\overline{QN}$. Again, only the Commander is allowed to move, and it eventually reaches $S_iS'_i$, for the same reasons as before.

\paragraph*{Phase 2}
When the Commander is on $S_iS'_i$, it waits until the Number robot has a distance from $Q$ of $\overline{CS_i}$. Observe that, as the Number robot moves within $QQ'$, the slope of the line $RN$ does not change, and therefore the Virtual Commander $C'$ computed with respect to $R$ and $N$ is always the position that $C$ occupied at time $t$. So, the point $S_i$ is always the same, and the Number robot keeps consitently moving toward the same destination point on $QQ'$.

As for phase~1, $\overline{CR}$ never becomes as small as $d'$ or as large as $d+\mu$, and therefore the Number robot always executes line~\ref{r:19} until it reaches the correct point on $QQ'$.

\paragraph*{Phase 3}
When $\overline{NQ}=\overline{CS_i}$, the Commander knows it has to start moving again, first to $B_i$ and then to $D_i$, due to lines~\ref{r:8}--\ref{r:10}. Again, while this happens the Virtual Commander computed based on $R$ and $N$ is always the same point $C'$, so the positions of $S_i$, $B_i$, $D_i$, etc. remain consistent, and the distance between $R$ and $C$ never gets as small as $d'$ or as large as $d+\mu$ until the Commander has reached $D_i$. In particular the Number robot never sees the Commander on $S_iS'_i$ after it has left it, and so it does not move. Eventually, the Commander reaches $D_i$.

\paragraph*{Phase 4}
When the Commander reaches $D_i$, its distance from $R$ finally becomes $d+\mu$ (if $i=1$) or $d'$ (if $i=2$ or $i=3$), and so the Number robot executes lines~\ref{r:13}--\ref{r:16} and starts moving. While the Number robot moves, the Commander does not: indeed, as long as the Number robot is tasked with moving, the Reference robot never moves (cf.~line~\ref{r:22}), and hence $\overline{CR}$ remains the same. Therefore, if the Commander computes a Virtual Commander $\widetilde C$ based on $N$ and $R$, and then computes the points $D_i$ and the midway triangles $\tau_i$ with respect to $\widetilde C$, it will never believe to be in $\widetilde C$ or in the interior of the segment $\widetilde CD_i$ or in $\tau_i$, no matter where $N$ is. This is because all such points have distance greater than $d'$ and smaller than $d+\mu$ from $R$ (cf.~Figure~\ref{f:machine}). So, the conditions of lines~\ref{r:5}--\ref{r:10} are never satisfied.

Suppose that the Commander is in $D_1$. This configuration is correctly identified by the Number robot no matter how it moves, because it is the only one in which $\overline{CR}=d+\mu$. So the Number robot computes the point $D'_1$ correctly (it does so only based on $C$ and $R$) and keeps moving toward $D'_1$ until it reaches it (line~\ref{r:14}).

Suppose now that the Commander is in $D_2$. The Number robot recognizes this configuration because $\overline{CR}=d+\mu$ and $\angle NRC > 90^\circ$. Again, it correctly computes $D'_2$ and moves toward it (line~\ref{r:15}). As the Number robot moves, the angle $\angle NRC$ grows (cf.~Figure~\ref{f:machine}), and so the condition of line~\ref{r:15} keeps being satisfied. Eventually, the Number robot reaches $D'_2$.

Finally, suppose that the Commander robot is in $D_3$. Now $\overline{CR}=d+\mu$ and $\angle NRC < 90^\circ$, and so the Number robot starts moving toward $D'_3$ (cf.~Figure~\ref{f:machine}). We have to prove that, if it stops on its way to $D'_3$ and gets a new snapshot, the inequality $\angle NRC < 90^\circ$ keeps being satisfied, and so the Number robot re-computes $D'_3$ as its destination point, until it reaches it. This is not trivial, since the angle $\angle NRC$ grows as $N$ approaches $D'_3$.

\begin{figure}[ht]
\begin{center}
\includegraphics[width=\linewidth]{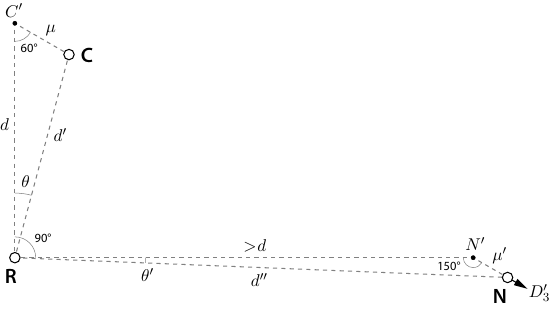}
\end{center}
\caption{As $N$ moves toward $D'_3$, we have $\theta > \theta'$, and hence $\angle NRC<90^\circ$}
\label{f:angle}
\end{figure}

The situation is illustrated in Figure~\ref{f:angle}, where $C'$ represents the starting position of the Commander and $N'$ the starting position of the Number robot. Since $\angle N'RC'=90^\circ$, proving that $\angle NRC<90^\circ$ is equivalent to proving that $\theta > \theta'$ (where $\theta'=\angle NRN'$).

By the law of sines applied to triangle $RCC'$,
\begin{equation*}
\frac{d'}{\sin 60^\circ} = \frac{\mu}{\sin \theta},
\end{equation*}
implying that
\begin{equation}\label{e:1}
\sin\theta = \frac{\sqrt 3 \mu}{2d'}.
\end{equation}
Again for the law of sines applied to triangle $RNN'$,
\begin{equation*}
\frac{d''}{\sin 150^\circ} = \frac{\mu'}{\sin \theta'},
\end{equation*}
where $\mu'=\overline{N'N}$. Hence
\begin{equation}\label{e:2}
\sin\theta' = \frac{\mu'}{2d''}.
\end{equation}

Observe that $\mu>\mu'$, because $N$ lies on $N'D'_3$, and therefore $\sqrt 3\mu > \mu'$ (i.e., the numerator of~(\ref{e:1}) is greater than that of~(\ref{e:2})). Recall that $\mu\ll d$, and so $d'<d$. Moreover, since $\overline{RN'}\geq 2d-\lambda/2$ (cf.~Figure~\ref{f:machine}) and $\lambda\ll d$, it follows that $d<\overline{RN'}$. Also observe that $\overline{RN'}<d''$, from which we obtain that $2d'<2d''$ (i.e., the denominator of~(\ref{e:1}) is smaller than that of~(\ref{e:2})). As a consequence, $\sin \theta > \sin \theta'$. Since $\mu\ll d$, both $\theta$ and $\theta'$ are acute, which means that $\theta > \theta'$ (the function $\sin x$ increases monotonically when $x\in [0, 90^\circ]$).

\paragraph*{Phase 5}
Since in the previous phases either the Commander or the Number robot was always tasked with moving, the condition of line~\ref{r:22} was never satisfied, and hence the Reference robot never moved. Now that the Commander is in $D_i$ and the Number robot is in $D'_i$, they are no longer tasked with moving, and so the Reference robot executes lines~\ref{r:23}--\ref{r:25}.

Ideally, the Reference robot should complete the translation of the TuringMobile in order to put it in a rest position again. This is achieved by moving by vector $\overrightarrow{C''C}$, where $C''$ is the initial position of the Commander (i.e., its position when phase~1 starts). Instead of trying to reconstruct $C''$, the Reference robot constructs two circumferences $\gamma$ and $\gamma'$ and moves to their nearest intersection point. Note that $\gamma'$ passes through the center of $\gamma$, and hence it has at most two intersection points with it. At least one intersection point exists: this is the point $P=R'+\overrightarrow{C''C}$, where $R'$ is the initial position of the Reference robot (which coincides with its position when phase~5 starts). If there is another intersection point $P'$ between the two circles, it must be symmetric to $P$ with respect to line $CN$, because such line passes through the centers of both circles (recall that the segment $CN$ is a diameter of $\gamma'$). So, assuming that the Commander and the Number robot do not move in this phase, $P$ remains the destination point of the Reference robot as long as the robot never crosses the line $CN$. But this is impossible, since the segment $R'P$ has length $\mu\ll d$, and therefore it cannot cross the line $CN$, whose distance from $R'$ is roughly $2d/\sqrt{5}\gg\mu$.

It remains to prove that, as the Reference robot moves toward $P$, the Commander and the Number robot remain still. Recall that, when phase~5 starts, either $\overline{CR} = d+\mu>d$ or $\overline{CR} = d'<d$. As $R$ approaches $P$ (and $C$ does not move), $\overline{CR}$ converges monotonically to $d$. It follows that $\overline{CR}$ never becomes $d+\mu$ or $d'$ again, and so the condition of line~\ref{r:12} is never satisfied.

Consider now the Virtual Commander $C'$ computed with respect to $R$ and $N$ when $R$ is strictly between $R'$ and $P$, and construct the three segments $C'D_i$ and the three midway triangles $\tau_i$ around $C'$. If we can prove that, no matter where $R$ is located in the interior of the segment $RN$, $C$ never lies on any of these segments and triangles, we are finished: indeed, this would mean that the conditions of lines~\ref{r:5}--\ref{r:10} and line~\ref{r:19} are never satisfied.

Suppose that the Number robot has moved to $D'_1$ during phase~4, which means that at the beginning of phase~5 we have $\overline{CR}=d+\mu$: this case is illustrated in Figure~\ref{f:proof1}. We have to show that $C$ does not lie on any of the solid gray lines and triangles around the Virtual Commander $C'$. It is obvious that the lines $C''C$ and $RC'$ are not parallel and intersect each other at $R$. Also, $C'$ and $N$ are always on opposite sides of $RC'$. This already implies that $C$ cannot be on the segments $C'D_1$ and $C'D_2$ or on their respective midway triangles.

\begin{figure}[h!]
\begin{center}
\includegraphics[width=\linewidth]{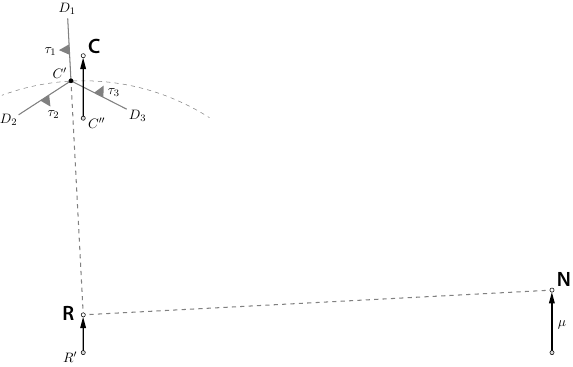}
\end{center}
\caption{Correctness of phase~5 when the Number robot has moved to $D'_1$}
\label{f:proof1}
\end{figure}

To show that $C$ does not lie on $C'D_3$ or $\tau_3$, consider the circle through $C'$ centered at $R$. Note that $C$ is always outside the circle, because its radius is $d$, but $d<\overline{RC}<d+\mu$.  Since $\mu$ can be arbitrarily small compared to $d$, the angle between $C''C$ and $RC'$ can be made arbitrarily small, as well (cf.~Figure~\ref{f:proof1}). If we take a small-enough $\mu$, the segment $C'D_3$ is entirely contained in the circle, and hence it cannot contain $C$. Moreover, since $\tau_3$ has height $\lambda\ll\mu$, by taking a small-enough $\lambda$ we ensure that $\tau_3$ is contained in the circle, too.

Suppose now that the Number robot has moved to $D'_2$ during phase~4: then, at the beginning of phase~5, $\overline{CR}=d'$ and $\angle NRC>90^\circ$. On the other hand, when $R$ reaches $P$, we have $\overline{CR}=d$ and $\angle NRC=90^\circ$. It follows that, when $R$ is strictly between $R'$ and $P$, $d'<\overline{CR}<d$ and $\angle NRC>90^\circ$ (because both quantities change monotonically), as Figure~\ref{f:proof2} shows. Since $\angle NRC>90^\circ$, $C$ cannot be located on $C'D_1$ or $C'D_3$ or $\tau_3$, because all their points $X$ satisfy $\angle NRX\leq 90^\circ$. Also, all the points in $\tau_1$ have distance at least $d+\mu/2-\lambda/2>d$ from $R$ (recall that $\lambda\ll\mu$), and so $C$ cannot lie in $\tau_1$, because $\overline{CR}<d$.

\begin{figure}[h!]
\begin{center}
\includegraphics[width=\linewidth]{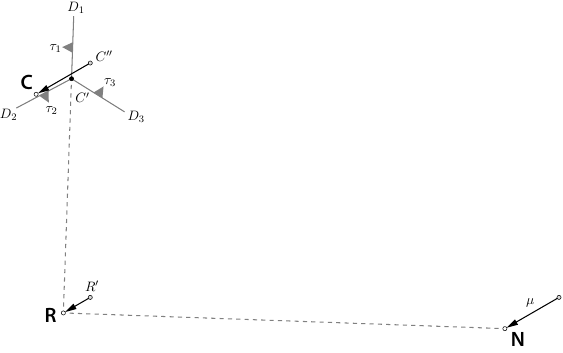}
\end{center}
\caption{Correctness of phase~5 when the Number robot has moved to $D'_2$}
\label{f:proof2}
\end{figure}

Let us show that $C$ does not lie on $C'D_2$ or $\tau_2$. Observe that $\overline{R'C''} = \overline{RC'} = d$, that $\angle RC'D_2 = \angle R'C''C = 60^\circ$, and that $R'R$ and $C''C$ are parallel (cf.~Figure~\ref{f:proof2}). It follows that the line $C''C$ is obtained by rotating line $C'D_2$ about $R$ by some angle $\theta>0$. These two lines are not parallel, and hence they intersect in a single point $K$. As $R$ approaches $P$, $\theta$ tends monotonically to $0$, and $K$ approaches the foot $U$ of the altitude from $R$ to the line $C''C$. So, if we take $\mu$ small enough with respect to $d$, we can keep $K$ as close as we want to $U$. The distance between $U$ and $C''$ is obviously minimum when $R=R'$, in which case $\overline{C''U}=d/2$. It follows that, for small-enough values of $\mu$, $\overline{C''K}$ is always as close as we want to $d/2$. Hence we have $\overline{C''K}>\mu = \overline{C''C}$, proving that $C\neq K$, and so $C$ cannot be on $C'D_2$. Also, since $\angle NRC>90^\circ$, $C$ and $\tau_2$ are always on opposite sides of $C'D_2$ (cf.~Figure~\ref{f:proof2}), and so $C$ cannot be in $\tau_2$.

Lastly, suppose that the Number robot has moved to $D'_3$ during phase~4: then, at the beginning of phase~5, $\overline{CR}=d'$ and $\angle NRC<90^\circ$, as shown in Figure~\ref{f:proof3}. Similarly to the previous case, we can prove that $C$ cannot lie on $C'D_3$ or $\tau_3$ because the line $C''C$ is obtained by rotating line $C'D_3$ about $R$ by some angle $\theta>0$ that can be made arbitrarily small by just decreasing $\mu$. Again, this implies that the intersection point between the lines $C''C$ and $C'D_3$ can be kept at a distance from $C''$ arbitrarily close to $d/2$, and can therefore never coincide with $C$, which is only $\mu$ away from $C''$. Also, because $\angle NRC<90^\circ$ ($\angle NRC$ increases monotonically and converges to $90^\circ$ as $R$ converges to $P$), $C$ and $\tau_3$ are always on opposite sides of $C'D_3$, and so $C$ cannot be in $\tau_3$.

\begin{figure}[h!]
\begin{center}
\includegraphics[width=\linewidth]{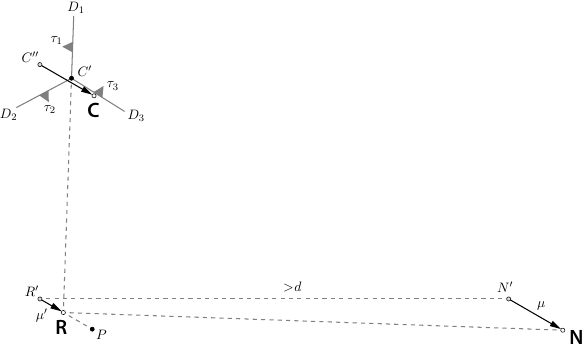}
\end{center}
\caption{Correctness of phase~5 when the Number robot has moved to $D'_3$}
\label{f:proof3}
\end{figure}

To conclude the proof, it suffices to show that $C$ and $R'$ lie on strictly opposite sides of line $RC'$: indeed, this would imply that $C$ is not on the segments $C'D_1$ and $C'D_2$ or in their respective midway triangles, because these lie on the same side of $RC'$ as $R'$ or on the line $RC'$ itself (cf.~Figure~\ref{f:proof3}). To prove this claim, consider a Cartesian coordinate system with origin in $R$ and $x$ axis oriented as $\overrightarrow{R'N'}$. Let $0<\mu'=\overline{R'R}<\mu$. Since the line $RR'$ forms an angle of $60^\circ$ with the $y$ axis, the coordinates of $R'$ are
\begin{equation*}
R'=\left(-\frac{\sqrt 3 \mu'}{2},\ \frac{\mu'}{2}\right).
\end{equation*}
We therefore have 
\begin{equation*}
C''=R'+(0,\,d)=\left(-\frac{\sqrt 3 \mu'}{2},\ \frac{\mu'}{2}+d\right)
\end{equation*}
and
\begin{equation*}
C=C''+\left(\frac{\sqrt 3 \mu}{2},\ -\frac{\mu}{2}\right) = \left(\frac{\sqrt 3 (\mu-\mu')}{2},\ d-\frac{\mu-\mu'}{2}\right).
\end{equation*}
We also have
\begin{equation*}
N'=R'+\left(\overline{R'N'},\,0\right) = \left(\overline{R'N'}-\frac{\sqrt 3 \mu'}{2},\ \frac{\mu'}{2}\right)
\end{equation*}
and
\begin{equation*}
N=N'+\left(\frac{\sqrt 3 \mu}{2},\ -\frac{\mu}{2}\right) = \left(\overline{R'N'}+\frac{\sqrt 3 (\mu-\mu')}{2},\ \frac{\mu'-\mu}{2}\right).
\end{equation*}
It follows that the line $RN$ has equation
\begin{equation*}
y = \frac{\mu'-\mu}{2\cdot \overline{R'N'}+\sqrt 3 (\mu-\mu')}\ x.
\end{equation*}
Since the line $RC'$ is orthogonal to $RN$, it has equation
\begin{equation}\label{e:3}
y = \left(\frac{2\cdot \overline{R'N'}}{\mu-\mu'}+\sqrt 3\right) x.
\end{equation}
Observe that $RC'$ passes through the origin and its slope is positive. Hence $R'$ lies above this line, because its $x$ coordinate is negative and its $y$ coordinate is positive.

Let us now plug the $x$ coordinate of $C$ in~(\ref{e:3}):
\begin{equation}\label{e:4}
y = \left(\frac{2\cdot \overline{R'N'}}{\mu-\mu'}+\sqrt 3\right)\cdot \frac{\sqrt 3 (\mu-\mu')}{2} = \sqrt 3 \cdot \overline{R'N'} + \frac{3(\mu-\mu')}{2}.
\end{equation}
Recall from the discussion on phase~4 that $\overline{R'N'}>d$ (it corresponds to $\overline{RN'}$ in Figure~\ref{f:angle}), and therefore the $y$ in~(\ref{e:4}) is abundantly greater than $d$. On the other hand, the $y$ coordinate of $C$ is $d-(\mu-\mu')/2$, which is smaller than $d$, implying that $C$ lies below the line $RC'$. We conclude that $C$ and $R'$ lie on opposite sides of $RC'$.

We have just proved that the Reference robot keeps moving until it reaches $P$, thus bringing the TuringMobile in a rest position again, say at time $t'$. We ultimately observe that the real number stored in the machine at time $t'$ is the same the one the Commander computed in phase~1 and that the Number robot copied during phase~2. This is because the Number robot and the Reference robot, during phases~4 and~5 respectively, have moved by $\mu$ in the same direction: so, at the end of phase~5, they have the same distance they had at the end of phase~2.

\section{Applications}\label{s5}
In this section we discuss some applications of the TuringMobile. We also prove that the basic TuringMobile constructed in Section~\ref{s3:1} is minimal, in the sense that no smaller team of oblivious robots can accomplish the same tasks.

\subsection{Exploring the Plane}
The first elementary task a basic TuringMobile in $\mathbb R^2$ can fulfill is that of \emph{exploring} the plane. The task consists in making all the robots in the TuringMobile see every point in the plane in the course of an infinite execution. We first assume that the three members of the TuringMobile are the only robots in the plane. Later in this section, we will extend our technique to other types of scenarios and more complex tasks.

\begin{theorem}\label{t:4}
A basic TuringMobile consisting of three robots in $\mathbb R^2$ can explore the plane.
\end{theorem}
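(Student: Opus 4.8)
The plan is to use the basic TuringMobile of Section~\ref{s3:1} as a ``moving Turing Machine'' that systematically visits every point of the plane, reducing the geometric task of exploration to the (already established) fact that the machine behaves as a single rigid robot in $\mathbb R^2$ with one persistent register and visibility radius $V$. By Lemma~\ref{l:3}, such a machine makes discrete rigid steps of fixed length $\mu$ in one of three directions $120^\circ$ apart, and can store and update a real number between steps; the three members all stay within a disk of radius $\sigma$ of the machine's center, so whatever the Leader's position sees within radius $V$ is also seen by all three robots (since their visibility is $V+\varepsilon$ and $\sigma \ll \varepsilon$). Thus it suffices to exhibit a program for a one-register rigid robot moving by $\mu$-steps in three fixed directions that causes the robot to pass within distance (say) $V$ --- in fact within any fixed positive distance --- of every point of the plane.

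First I would set up the coordinate framework: the three step directions $e_1, e_2, e_3$ (unit vectors $120^\circ$ apart) span a triangular lattice $\Lambda$ in the plane, and the set of points reachable from the starting position by finite sequences of $\pm$-free steps (only forward moves of length $\mu$ are allowed, but since the three directions positively span $\mathbb R^2$, any lattice translation can be realized) is exactly (a translate of) this triangular lattice. Crucially, every point of the plane is within distance $\mu$ of some lattice point, so visiting all lattice points suffices for exploration (and a fortiori for seeing every point, as long as we also recall the machine sees a neighborhood of radius $V \gg \mu$ around itself at every rest position). Next I would describe an explicit space-filling traversal of $\Lambda$: a spiral (or a boustrophedon sweep of concentric ``hexagonal rings'') that visits lattice points in a fixed, predetermined order, each consecutive pair differing by one of the three generators. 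The key point is that this traversal can be driven by a finite amount of persistent state encoded in the single real register --- for instance, encode in $r$ the index of the current ring together with the position along that ring (this is a single natural number, hence representable as a real via the encoding $\alpha(r)$ of Section~\ref{s3:1}); the Commander decodes $r$, computes which of the three directions to step next according to the spiral schedule, and the Number robot updates $r$ to the successor index. Since the schedule is purely arithmetic (integer additions, comparisons, and the encoding/decoding maps), it is computable by the real-RAM program the robots run.

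Then I would assemble the argument: at each rest position the machine executes one step of the spiral schedule (this is well-defined and correct by Lemma~\ref{l:3}, applied repeatedly as noted in Section~\ref{s4}); by induction on the step count, after finitely many steps the machine's Leader is at any prescribed lattice point of $\Lambda$; hence over the infinite execution the Leader visits \emph{every} point of $\Lambda$; and since every point of the plane lies within $\mu < V$ of some lattice point, and at each rest position the robots see everything within radius $V$ of the Leader, every point of the plane is eventually seen. One must also check that the asynchronous, non-rigid scheduler cannot stall the machine: Lemma~\ref{l:3} already guarantees that each individual step completes in finite time (using the fairness constant $\delta$), so the execution is an infinite sequence of completed steps, and the spiral schedule never terminates, giving the claim.

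The main obstacle I expect is bookkeeping the register encoding so that the state update really is a real-RAM--computable function and the robot never needs more than its single register: one has to verify that a spiral schedule on the triangular lattice admits a next-state function of the form ``natural number $\mapsto$ natural number'' that is computable with arithmetic and comparisons (no transcendental functions), and that this integer can be losslessly carried through the $\alpha/\alpha^{-1}$ encoding --- a point the paper is careful about in Section~\ref{s6}. A secondary, more technical point is the interface with non-rigidity: I must make sure the ``pretend to be at the lattice point'' simulation of Section~\ref{s3:2} is not needed here (a single basic component already moves rigidly by exactly $\mu$, so the Leader lands precisely on lattice points with no $2\sigma$-slack), which actually simplifies matters; the only real care is that the machine's members never drift far enough apart to break role identification, which is exactly what Lemma~\ref{l:3} rules out. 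Everything else is routine geometry of the triangular lattice.
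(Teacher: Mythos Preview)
Your proposal is correct and follows essentially the same approach as the paper: drive a spiral traversal of the plane by storing a single integer in the register and reading off the next direction from it. The paper is a bit more economical---it simply sets $r:=r+1$ at every step and uses the move count alone to determine the direction (one step toward $D_1$, then two toward $D_2$, then three toward $D_3$, then four toward $D_1$, \dots), rather than encoding a ring index plus a position along the ring---but your more structured encoding works just as well, and your observations about the triangular lattice and the sufficiency of Lemma~\ref{l:3} are on target.
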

\begin{proof}
Recall that a basic TuringMobile can store a single real number $r$ and update it at every move as a result of executing a real RAM program with input $r$. In particular, the TuringMobile can count how many times it has moved by simply starting its execution with $r=0$ and computing $r:=r+1$ at each move.

Moreover, the Commander chooses the direction of the next move (in the form of a point $D_i$, see Figure~\ref{f:machine}) by executing another real RAM program with input $r$. If $r$ is an integer, the Commander can therefore compute any Turing-computable function on $r$, and so it can decide to move to $D_1$ the first time, then to $D_2$ twice, then to $D_3$ three times, to $D_1$ four times, and so on. This pattern of moves is illustrated in Figure~\ref{f:explore}, and of course it results in the exploration of the plane, because the visibility radius of the robots is much greater than the step $\mu$.
\end{proof}

\begin{figure}[ht]
\begin{center}
\includegraphics[width=\linewidth]{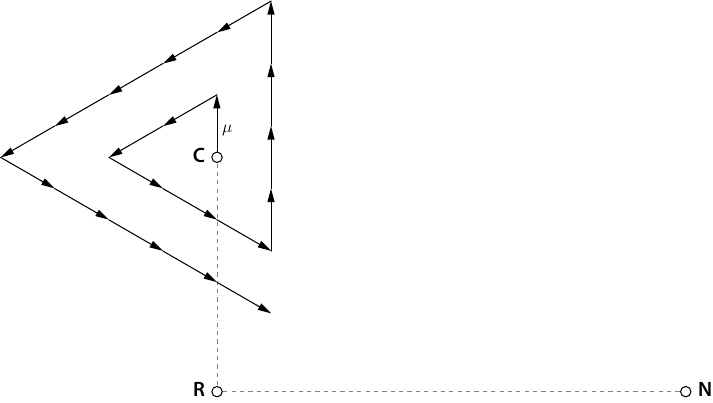}
\end{center}
\caption{Exploration of the plane by a basic TuringMobile}
\label{f:explore}
\end{figure}

\subsection{Minimality of the Basic TuringMobile}
We can use the previous result to prove indirectly that our basic TuringMobile design is minimal, because no team of fewer than three oblivious robots in $\mathbb R^2$ can explore the plane.

\begin{theorem}\label{t:5}
If only one or two oblivious identical robots with limited visibility are present in $\mathbb R^2$, they cannot explore the plane, even if the scheduler lets them move synchronously and rigidly.
\end{theorem}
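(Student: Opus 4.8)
The plan is to prove that, for every algorithm and every initial placement, the fully deterministic synchronous rigid execution (with the scheduler free to fix the robots' orientations, as the model allows) confines each robot to a region whose $V$-neighborhood is not the whole plane, so that some point of $\mathbb R^2$ is never seen by some robot. The one-robot case is immediate: a lone robot always takes an empty snapshot, so --- being oblivious --- it recomputes the same destination point forever; that point is a fixed vector in its own frame, hence a fixed vector $v$ in absolute coordinates, and its trajectory lies on the ray $\{p_0+kv:k\in\mathbb N\}$ (a single point if $v=0$), whose $V$-neighborhood is contained in a half-strip of width $2V$ (a disk if $v=0$) and hence is not $\mathbb R^2$.

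For two robots I would first force symmetry. Orientations are chosen by the scheduler and a correct algorithm must tolerate any choice, so let the scheduler give the two robots \emph{point-symmetric} frames, i.e.\ give one robot some frame and the other the same frame rotated by $180^\circ$, and let $O$ be the midpoint of the two starting positions. A short invariance computation then shows that the configuration stays point-symmetric about $O$ throughout: the two robots are always equidistant, so they take the same snapshot (in the two opposite frames the partner's relative position is the \emph{same} vector, and both snapshots are empty exactly when the partner is out of range), hence compute the same destination in local coordinates; the opposite frames turn this into opposite absolute displacements, which keeps $O$ the midpoint. So the whole execution is encoded by the orbit $u_0,u_1,u_2,\dots$ of a single deterministic planar map, where $u_t$ is the displacement from $O$ of (say) the first robot; that robot's trajectory is exactly $\{O+u_t:t\in\mathbb N\}$, and the robots explore only if $\bigcup_t \overline B(O+u_t,\,V)=\mathbb R^2$.

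Now I would invoke limited visibility. The two robots are out of each other's sight exactly when $\|u_t\|>V/2$; in that case each, seeing nothing, makes a fixed move, so by symmetry $u_{t+1}=u_t+c$ for one fixed vector $c$ whenever $\|u_t\|>V/2$. If $c=0$, the orbit freezes the first time it leaves $\overline B(O,V/2)$ and otherwise never leaves it --- either way the trajectory is bounded and we are done. If $c\neq 0$, decompose the orbit into maximal ``near'' runs ($\|u_t\|\le V/2$) and maximal ``far'' runs ($\|u_t\|>V/2$). A far run translates by $c$ at each step, so it lies on a line parallel to $c$; if that run later re-enters the near region, the line comes within $V/2$ of $O$ and hence --- being parallel to $c$ --- lies inside the strip $\mathcal S$ of half-width $V/2$ about the line through $O$ in direction $c$; the near points lie in $\overline B(O,V/2)\subseteq\mathcal S$ as well. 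Therefore either all far runs are finite and the whole trajectory lies in the width-$V$ strip $\mathcal S$, or one (necessarily the last) far run is infinite and the trajectory lies in $\mathcal S$ plus a single ray in direction $c$. In both cases the $V$-neighborhood of the trajectory sits inside a bounded-width strip parallel to $c$, possibly together with one parallel half-strip, which is never all of $\mathbb R^2$; so the first robot misses points, and the robots do not explore.

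I expect the crux to be the case $c\neq 0$ of the two-robot analysis. A priori the program may output arbitrarily large destinations, so each time the robots \emph{do} see each other the orbit can jump far away, and one has to rule out that such jumps, interleaved with the straight excursions the robots make while out of sight, let a robot wander densely over the plane. The point-symmetry reduction is precisely what tames this: it collapses the two trajectories into a single planar orbit and pins the direction of every long excursion to the one fixed vector $c$, after which the strip argument closes the case. The remaining ingredients are routine --- the invariance computation, the remark that the degenerate instant when the two robots coincide does not spoil the symmetry, and the verification that the stated unions of parallel (half-)strips genuinely omit points of $\mathbb R^2$.
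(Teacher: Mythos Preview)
Your proof is correct and shares the paper's overall strategy: give the two robots point-symmetric local frames so that the midpoint $O$ is invariant, reduce to the orbit $(u_t)$ of a single robot about $O$, and exploit that whenever the robots are out of mutual sight the update is $u_{t+1}=u_t+c$ for a fixed $c$. Where you diverge is in the closing argument.

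The paper normalizes $c=(1,0)$ and looks at the half-plane $\xi=\{y\ge V\}$. Either the robot never stops in $\xi$ (so it never sees $\{y\ge 2V\}$), or the first time it does, its partner is by symmetry in $\{y\le -V\}$, so they are at distance $\ge 2V>V$; from then on it moves by $(1,0)$ only, its $y$-coordinate never changes, and it is trapped on a single ray inside $\xi$, whose $V$-neighborhood cannot cover the half-plane $\xi$. This ``escape into a half-plane and get trapped on a ray'' argument disposes of the case in two sentences.

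Your route instead classifies the orbit into near and far runs and proves the structural fact that every \emph{finite} far run lies on a line through $\overline B(O,V/2)$ (because its terminal step lands there), hence inside the width-$V$ strip $\mathcal S$ about the line $O+\mathbb R c$; at most one far run is infinite and contributes a single ray. The confinement to $\mathcal S$ plus one ray then gives the conclusion. This is more bookkeeping than the paper's half-plane trick, but it yields a slightly finer description of the orbit and handles $c=0$ explicitly (the paper leaves that degenerate case implicit in its ``WLOG $c=(1,0)$''). Both arguments rely on exactly the same model features: adversarial choice of orientations, obliviousness, and limited visibility.
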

\begin{proof}
Assume that a single oblivious robot is given in $\mathbb R^2$ (hence no other entities or obstacles are present). Since the robot always gets the same snapshot, it always computes the same destination point in its local coordinate system, and so it always translates by the same vector. As a consequence, it just moves along a straight ray, and therefore it cannot explore the plane.

Let two oblivious robots be given, and suppose that their local coordinate systems are oriented symmetrically. Whether the robots see each other or not, if they take their snapshots simultaneously, they get identical views, and so they compute destination points that are symmetric with respect to $O$. If they keep moving synchronously and rigidly, $O$ remains their midpoint. So, if the robots have visibility radius $V$, they see each other if and only if they are in the circle $\gamma$ of radius $V/2$ centered in $O$.

Let $O$ be the midpoint of the robots' locations, and consider a Cartesian coordinate system with origin $O$. Without loss of generality, when the robots do not see each other, they move by vectors $(1,0)$ and $(-1,0)$, respectively. Let $\xi$ be the half-plane $y\geq V$, and observe that $\xi$ lies completely outside $\gamma$.

It is obvious that the robots cannot explore the entire plane if neither of them ever stops in $\xi$. The first time one of them stops in $\xi$, it takes a snapshot from there, and starts moving parallel to the $x$ axis, thus never seeing the other robot again, and never leaving $\xi$. Of course, following a straight line through $\xi$ is not enough to explore all of it.
\end{proof}

\subsection{Near-Gathering with Limited Visibility}
The exploration technique can be applied to several more complex problems. The first we describe is the \emph{Near-Gathering} problem, in which all robots in the plane must get in the same disk of a given radius $\varepsilon$ (without colliding) and remain there forever. It does not matter if the robots keep moving, as long as there is a disk of radius $\varepsilon$ that contains them all.

It is clear that solving this problem from every initial configuration is not possible, and hence some restrictive assumptions have to be made. The usual assumption is that the initial visibility graph of the robots be connected~\cite{xxFlPSW05,xPaPV15}. Here we make a different assumption: there are three robots that form a basic TuringMobile somewhere in the plane, and each robot not in the TuringMobile has distance at least $\varepsilon$ from all other robots. (Actually we could weaken this assumption much more, but this simple example is good enough to showcase our technique.) Also, in the existing literature on the Near-Gathering problem it is always assumed that the robots agree on at least one coordinate axis, but here we do not need this assumption.

Say that all robots in the plane have a visibility radius of $V\gg\varepsilon$, and that the TuringMobile moves by $\mu\ll\varepsilon$ at each step. The TuringMobile starts exploring the plane as above, and it stops in a rest position as soon as it finds a robot whose distance from the Commander is smaller than $V/2$ and greater than $\varepsilon$. On the other hand, if a robot is not part of the TuringMobile, it waits until it sees a TuringMobile in a rest position at distance smaller than $V/2$. When it does, it moves to a designated area $\mathcal A$ in the proximity of the Commander. Such an area has distance at least $3d$ from the Commander, so no confusion can arise in the identification of the members of the TuringMobile. If several robots are eligible to move to $\mathcal A$, only one at a time does so: note that the layout of the TuringMobile itself gives an implicit total order to the robots around it. Observe that the robots cannot form a second TuringMobile while they move to $\mathcal A$: in order to do so, two of them would have to move to $\mathcal A$ at the same time and get close enough to a third robot. Once they enter $\mathcal A$, the robots position themselves on a segment much shorter than $d$, so they cannot possibly be mistaken for a TuringMobile.

Once the eligible robots have positioned themselves in $\mathcal A$, the TuringMobile resumes its exploration of the plane, and the robots in $\mathcal A$ copy all its movements. Of course, at each step the TuringMobile waits for all the robots in $\mathcal A$ to catch up before carrying on with the exploration. Now, if the total number of robots in the plane is known, the TuringMobile can stop as soon as all of them have joined it. Otherwise, the machine simply keeps exploring the plane forever, eventually collecting all robots. In both cases, the Near-Gathering problem is solved.

\subsection{Pattern Formation with Limited Visibility}
Suppose $n$ robots are tasked with forming a given \emph{pattern} consisting of a multiset of $n$ points: this is the \emph{Pattern Formation} problem, which becomes the \emph{Gathering} problem in the special case in which the points are all coincident. For this problem, it does not matter where the pattern is formed, nor does its orientation or scale.

Again, the Pattern Formation problem is unsolvable from some initial configurations, so we make the same assumptions as with the Near-Gathering problem. The algorithm starts by solving the Near-Gathering problem as before. The only difference is that now there is a second tiny area $\mathcal B$, attached to $\mathcal A$ (and still far enough from the TuringMobile), which the robots avoid when they join $\mathcal A$. This is because this second area will later be used to form the pattern.

Since $n$ is known, the TuringMobile knows when it has to interrupt the exploration of the plane because all robots have already been found. At this point, the robots switch algorithms: one by one, they move to $\mathcal B$ and form the pattern. This task is made possible by the presence of the TuringMobile, which gives an implicit order to all robots, and also unambiguously defines an embedding of the pattern in $\mathcal B$. So, each robot is implicitly assigned one point in $\mathcal B$, and it moves there when its turn comes.

If $n=3$ or $n=4$, there are uninteresting ad-hoc algorithms to do this: so, let us assume that $n\geq 5$. The first to move are the robots in $\mathcal A$: this part is easy, because they all lie on a small segment, which already gives them a total order, and allows them to move one by one. The robots only have to be careful enough not to collide with other robots before reaching their final positions. Again, this is trivial, because only one robot is allowed to move at a time.

When this part is done, there are at least two robots in $\mathcal B$, all of which have distance much smaller than $d$ from each other. Then the members of the TuringMobile join $\mathcal B$ as well, in order from the closest to the farthest. Each of them chooses a position in $\mathcal B$ based on the robots already there and the remnants of the TuringMobile. Moreover, the members of the TuringMobile that have not started moving to $\mathcal B$ yet cannot be mistaken for robots in $\mathcal B$, because they are at a greater distance from all others (and vice versa).

Note that, when the last robot leaves the TuringMobile and joins $\mathcal B$, it is able to find its final location because there are already at least four robots there, which provide a reference frame for the pattern to be formed. When this last robot has taken position in $\mathcal B$, the pattern is formed.

\subsection{Higher Dimensions}
Everything we said in this section pertained to robots in the plane. However, we can generalize all our results to robots in $\mathbb R^m$, for $m\geq 2$. Recall that, at the end of Section~\ref{s3:2}, we have described a TuringMobile for robots in $\mathbb R^m$, which can move within a specific plane $\gamma$ exactly as a bidimensional TuringMobile, but can also move back and forth by $\mu$ in all other directions orthogonal to $\gamma$.

Now, extending our results to $\mathbb R^m$ actually boils down to exploring the space with a TuringMobile: once we can do this, we can easily adapt our techniques for the Near-Gathering and the Pattern Formation problem, with negligible changes.

There are several ways a TuringMobile can explore $\mathbb R^m$: we will only give an example. Consider the exploration of the plane described at the beginning of this section, and let $P_i$ be the point reached by the Commander after its $i$th move along the spiral-like path depicted in Figure~\ref{f:explore} ($P_0$ is the initial position of the Commander).

Our $m$-dimensional TuringMobile starts exploring $\gamma$ as if it were $\mathbb R^2$. Whenever it visits a $P_i$ for the first time, it goes back to $P_0$. From $P_0$, it keeps making moves orthogonal to $\gamma$ until it has seen all points in $\mathbb R^m$ whose projection on $\gamma$ is $P_0$ and whose distance from $P_0$ is at most $i$. Then it goes back to $P_0$, moves to $P_1$, and repeats the same pattern of moves in the section of $\mathbb R^m$ whose projection on $\gamma$ is $P_1$. It then does the same thing with $P_2$, etc. When it reaches $P_{i+1}$ (for the first time), it goes back to $P_0$, and proceeds in the same fashion. By doing so, it explores the entire space $\mathbb R^m$.

Note that this algorithm only requires the TuringMobile to count how many moves it has made since the beginning of the execution: thus, the machine only has to memorize a single integer. The direction of the next move according to the above pattern is then obviously Turing-computable given the move counter.

\section{Conclusions}\label{s6}
We have introduced the TuringMobile as a special configuration of oblivious non-rigid robots that can simulate a rigid robot with memory. We have also applied the TuringMobile to some typical robot problems in the context of limited visibility, showing that the assumption of connectedness of the initial visibility graph can be dropped if a unique TuringMobile is present in the system. Our results hold not only in the plane, but also in Euclidean spaces of higher dimensions.

The simplest version of the TuringMobile (Section~\ref{s3:1}) consists of only three robots, and is the smallest possible configuration with these characteristics (Theorems~\ref{t:4} and~\ref{t:5}).
Our generalized TuringMobile (Section~\ref{s3:2}), which works in $\mathbb R^m$ and simulates $k$ registers of memory, consists of $3m+3k$ robots (Theorem~\ref{t:2}). We believe we can decrease this number to $m+k+3$ by putting all the Number robots in the same basic component and adopting a more complicated technique to move them. However, minimizing the number of robots in a general TuringMobile is left as an open problem.

Our basic TuringMobile design works if the robots have the same radius of visibility, because that allows them to implicitly agree on a unit of distance. We could remove this assumption and let each of them have a different visibility radius, but we would have to add a fourth robot to the TuringMobile for it to work (as well as keep the TuringMobile small compared to \emph{all} these radii).

Recall that, in order to encode and decode arbitrary real numbers we used the $\alpha$ function and its inverse, which in turn are computed using the $\arctan$ and the $\tan$ functions. However, using transcendental functions is not essential: we could achieve a similar result by using only comparisons and arithmetic operations. The only downside would be that such a real RAM program would not run in a constant number of machine steps, but in a number of steps proportional to the value of the number to encode or decode. With this technique, we would be able to dispense with the trigonometric functions altogether, and have our robots use only arithmetic operations and square roots to compute their destination points.

\end{document}